\newcounter{note}[section]
\newcommand{\yuetodo}[1]{{\large\color{green}[Yue todo: #1]}}
\definecolor{yue}{rgb}{0.7, 0, 0}
\renewcommand{\yuetodo}[1]{}
\newcommand{\msf}[1]{\ensuremath{{\mathsf {#1}}}}
\newcommand{\mcal}[1]{\ensuremath{\mathcal {#1}}}
\newcommand{\ceil}[1]{\ensuremath{\left \lceil #1 \right \rceil}}
\definecolor{darkgreen}{rgb}{0,0.5,0}
\definecolor{lightblue}{RGB}{0,176,240}
\definecolor{darkblue}{RGB}{0,112,192}
\definecolor{lightpurple}{RGB}{124, 66, 168}
\definecolor{grey}{RGB}{139, 137, 137}
\definecolor{maroon}{RGB}{178, 34, 34}
\definecolor{green}{RGB}{34, 139, 34}
\definecolor{types}{RGB}{72, 61, 139}
\definecolor{gold}{rgb}{0.8, 0.33, 0.0}
\definecolor{darkgray}{gray}{0.3}
\definecolor{darkred}{rgb}{0.5, 0, 0}
\definecolor{darkgreen}{rgb}{0, 0.5, 0}
\definecolor{darkblue}{rgb}{0,0,0.5}
\newcommand\markx[2]{}
\newcommand{\Z}{\mathbb{Z}}
\newcommand{\ignore}[1]{}
\newcommand{\Geom}{\ensuremath{{\sf Geom}}}
\newcommand{\HS}{\ensuremath{{\sf HS}}}
\newcounter{task}
\newtheorem{theorem}{Theorem}[section]
\newtheorem{corollary}[theorem]{Corollary}
\newtheorem{fact}[theorem]{Fact}
\newtheorem{lemma}[theorem]{Lemma}
\newtheorem{definition}[theorem]{Definition}
\newtheorem{remark}[theorem]{Remark}
\newcounter{cnt:challenge}
\def\moverlay{\mathpalette\mov@rlay}
\def\mov@rlay#1#2{\leavevmode\vtop{%
   \baselineskip\z@skip \lineskiplimit-\maxdimen
   \ialign{\hfil$\m@th#1##$\hfil\cr#2\crcr}}}
\newcommand{\charfusion}[3][\mathord]{
    #1{\ifx#1\mathop\vphantom{#2}\fi
        \mathpalette\mov@rlay{#2\cr#3}
      }
    \ifx#1\mathop\expandafter\displaylimits\fi}
\newcommand{\amount}{x} 
\newcommand{\price}{{p}}
\title{Indifferential Privacy: A New Paradigm and Its Applications to Optimal Matching in Dark Pool Auctions}
\author{Antigoni Polychroniadou\thanks{
J.P. Morgan AI Research, J.P. Morgan AlgoCRYPT CoE, USA.
\url{antigoni.polychroniadou@jpmorgan.com}}
\and
T.-H. Hubert Chan\thanks{University of Hong Kong. \url{hubert@cs.hku.hk}}
\and
Adya Agrawal\thanks{
J.P. Morgan Chase, India.
\url{adya.agrawal@jpmchase.com}}
}
\date{}
\begin{document}

\maketitle

\begin{abstract}

Public exchanges like the New York Stock Exchange and NASDAQ act as auctioneers in a public double auction system, where buyers submit their highest bids and sellers offer their lowest asking prices, along with the number of shares (volume) they wish to trade. The auctioneer matches compatible orders and executes the trades when a match is found. However, auctioneers involved in high-volume exchanges, such as dark pools, may not always be reliable. They could exploit their position by engaging in practices like front-running or face significant conflicts of interest—ethical breaches that have frequently resulted in hefty fines and regulatory scrutiny within the financial industry.

Previous solutions, based on the use of fully homomorphic encryption (Asharov et al., \mbox{AAMAS 2020}), encrypt orders ensuring that information is revealed only when a match occurs. However, this approach introduces significant computational overhead, making it impractical for high-frequency trading environments such as dark pools.

In this work, we propose a new system based on differential privacy combined with lightweight encryption, offering an efficient and practical solution that mitigates the risks of an untrustworthy auctioneer. Specifically, we introduce a new concept called Indifferential Privacy, which can be of independent interest, where a user is indifferent to whether certain information is revealed after some special event, unlike standard differential privacy. For example, in an auction, it's reasonable to disclose the true volume of a trade once all of it has been matched. Moreover, our new concept of Indifferential Privacy allows for maximum matching, which is impossible with conventional differential privacy.

\ignore{

\begin{itemize}
    \item Consider the shuffle model
    \item Multiple orders per user are submitted in units of 1 (always subtract from the real volume)
    \item Make the volume DP (obfuscate total volume from shuffler). Either add dummy or subtract from real volume.

    \item Use garbling to identify the users of the match at the end of the process (if both users agree to execute the match)
    \item Use order preserving encryption for the (ids +order id)
    \item If we add more volume is there an elegant solution? 

\item Formalize privacy leakage with respect to the matching algorithm; it should not know the exact volume.  But, what about other information such as prices?
    
\end{itemize}
}

\end{abstract}

\section{Introduction}
Dark pools are private trading venues designed for institutional investors to execute large trades anonymously, concealing details such as price and identities until after the transaction. Orders, which include the trade direction (buy or sell), volume, and price, are matched by the operator when they have opposite directions and compatible bid and ask prices. While they help prevent price swings from large orders, there are trust concerns. Operators may engage in front running, using insider knowledge of upcoming trades to execute their own trades first and profit from the price movement. Additionally, dark pool operators, often large financial institutions, may prioritize their own trades over clients', creating a conflict of interest. The lack of transparency also makes it difficult to detect manipulative practices, raising concerns about fairness. Several dark pool operators have been fined for misconduct, including misleading investors and failing to maintain proper trading practices. Notable examples include Barclays (\$70 million) and Credit Suisse (\$84.3 million) in 2016 for misrepresenting their dark pool operations, Deutsche Bank (\$3.7 million in 2017) for similar issues, ITG (\$20.3 million in 2015) for conflicts of interest, and Citigroup (\$12 million in 2018) for misleading clients about trade execution~\cite{fines}.

While dark pools aim to prevent the leakage of large orders, operators still gain privileged access to clients' hidden orders, creating potential for conflicts of interest or misuse of sensitive data. Recent research has focused on cryptographically protecting order information. These systems allow users to submit orders in encrypted form, enabling dark pool operators to compare orders without revealing their contents, only unveiling them when matches occur. This approach ensures greater security and mitigates risks associated with operator access to sensitive trade information. 

Asharov et al.~\cite{AsharovBPV20} introduced a secure dark pool model using Threshold Fully Homomorphic Encryption (FHE), which combines two cryptographic techniques: FHE, allowing computations on encrypted data without decryption, and Threshold Cryptography, where a secret (such as a decryption key) is split among multiple parties. In this approach, data is encrypted with a public key, and computations are performed on the encrypted data by an untrusted party. Decryption requires a threshold number of participants to combine their key shares. In Asharov et al.~\cite{AsharovBPV20} model, orders are encrypted under a public key, and the operator matches them directly on the encrypted data. Once a match is found, the orders are decrypted by the clients using their decryption shares, ensuring that no single entity, including the operator, can access the sensitive order details. Throughout the process, the orders remain encrypted, preventing any single party from accessing both the data and the decryption key. However, FHE is known for being computationally heavy, and adding a threshold mechanism increases the complexity. Optimizing this for real-time or large-scale applications is still an active research area. Moreover, the need for multiple parties to collaborate on decryption and sometimes computation can introduce significant communication overhead. As a result, the process takes nearly a full second to complete a single match, significantly impacting performance.

A recent development, Prime Match~\cite{polychroniadou2023prime}, introduces a solution to protect the confidentiality of periodic auctions run by a market operator. Prime Match allows users to submit orders in an encrypted form, with the operator comparing these orders through encryption and only revealing them if a match occurs. The auctions capture the trade direction (buy or sell) and the desired volume, but exclude price. Prime Match represents the first financial tool based on secure multiparty computation (MPC). In the high-stakes, highly competitive financial sector, MPC is gaining significant traction as a crucial enabler of privacy, with J.P. Morgan successfully deploying Prime Match in production.

However, in continuous double auctions without excluding the price, such as those in dark pools, where the computational complexity surpasses that of simpler periodic auctions like Prime Match, secure computation techniques fall short. They cannot support high-frequency trading within acceptable timeframes, limiting the feasibility of these methods for enhancing privacy in dark pools and the broader financial sector, where speed and efficiency are critical.

In this work, we pose the question: Can we achieve a privacy-preserving solution to dark pools with efficiency comparable to non-private dark pool protocols? We propose a system that combines differential privacy with encryption, providing a more efficient alternative to secure MPC and FHE. Differential Privacy achieves privacy by adding noise to the results of queries or computations on datasets. The level of noise is determined by a privacy parameter, which quantifies the trade-off between privacy and accuracy. By leveraging differential privacy, our dark pool approach ensures that individual orders are obfuscated while still allowing for effective matching. This method conceals the most critical aspect of dark pools—the volumes of orders—thus preserving the primary objective of dark pools, which is to hide large trades. 

In summary, integrating differential privacy with encryption offers a streamlined, efficient solution that balances privacy and computational feasibility, making it an attractive practical alternative to impractical methods like MPC and FHE. 

\subsection{Our Contributions:}

\paragraph{Problem Statement:}

The dark pool consists of $n$ agents (clients), and an operator who receives the orders from the clients. Each order takes one of two forms: (1) Buy Order: $({\sf buy}, \price, \amount)$, where $\amount$ is the quantity/volume, and $\price$ is the highest price the buyer is willing to pay per share.
(2) Sell Order: $({\sf sell}, \price, \amount)$, where $\price$ is the lowest price the seller is willing to accept per share. A buy order can be matched to a sell order if the buying
price is at least the selling price. Our objective is to design matching protocols that maximize the total number of matches while preserving user data privacy. Specifically, we aim to conceal the quantity $\amount$ of the orders during the process. Our key contributions are as follows:
\begin{enumerate}
    \item {\bf Practical Dark Pool Solution:} We propose an efficient solution for continuous double auctions (such as dark pools) that conceals both bid and ask quantities, effectively reducing reliance on trusted auctioneers while ensuring privacy guarantees.
    \item {\bf Novel Privacy Concept:} We introduce indifferential privacy, a new extension of differential privacy tailored to this context, which can be of independent interest with potential applicability beyond auctions and dark pools.
    \item {\bf Maximum Matching:} Our new notion of indifferential privacy allows us to achieve the optimal maximum matching which is impossible to achieve under conventional differential privcy~\cite{DBLP:journals/siamcomp/HsuHRRW16}. 
     \item {\bf Efficiency and Implementation: } Our system significantly outperforms previous privacy-preserving auction models, which often struggled with practicality and hindered their adoption in production. We show that our solution rivals non-private auction protocols in terms of performance, making it viable for real-world deployment. 
\end{enumerate}

\paragraph{High-Level Idea of our techniques:} To preserve privacy while matching buy and sell orders, each order is viewed as containing 
$\amount$ units, with users submitting 
$\amount+ noise$  orders to the server based on indifferential privacy. The orders are represented as nodes in a bipartite graph, with sell orders from sellers $S_i$ on the left and buy orders from buyers $B_i$ on the right. See Figure~\ref{fig:auction} for an example. The server ranks the nodes based on price $p$, where higher prices for buyers and lower prices for sellers are considered more favorable, referred to as "extreme" prices. The algorithm then constructs a bipartite graph, with edges connecting buy and sell nodes if the buying price meets or exceeds the selling price.

\begin{figure}[hbt!]
    \centering
    \includegraphics[width=0.6\columnwidth]{./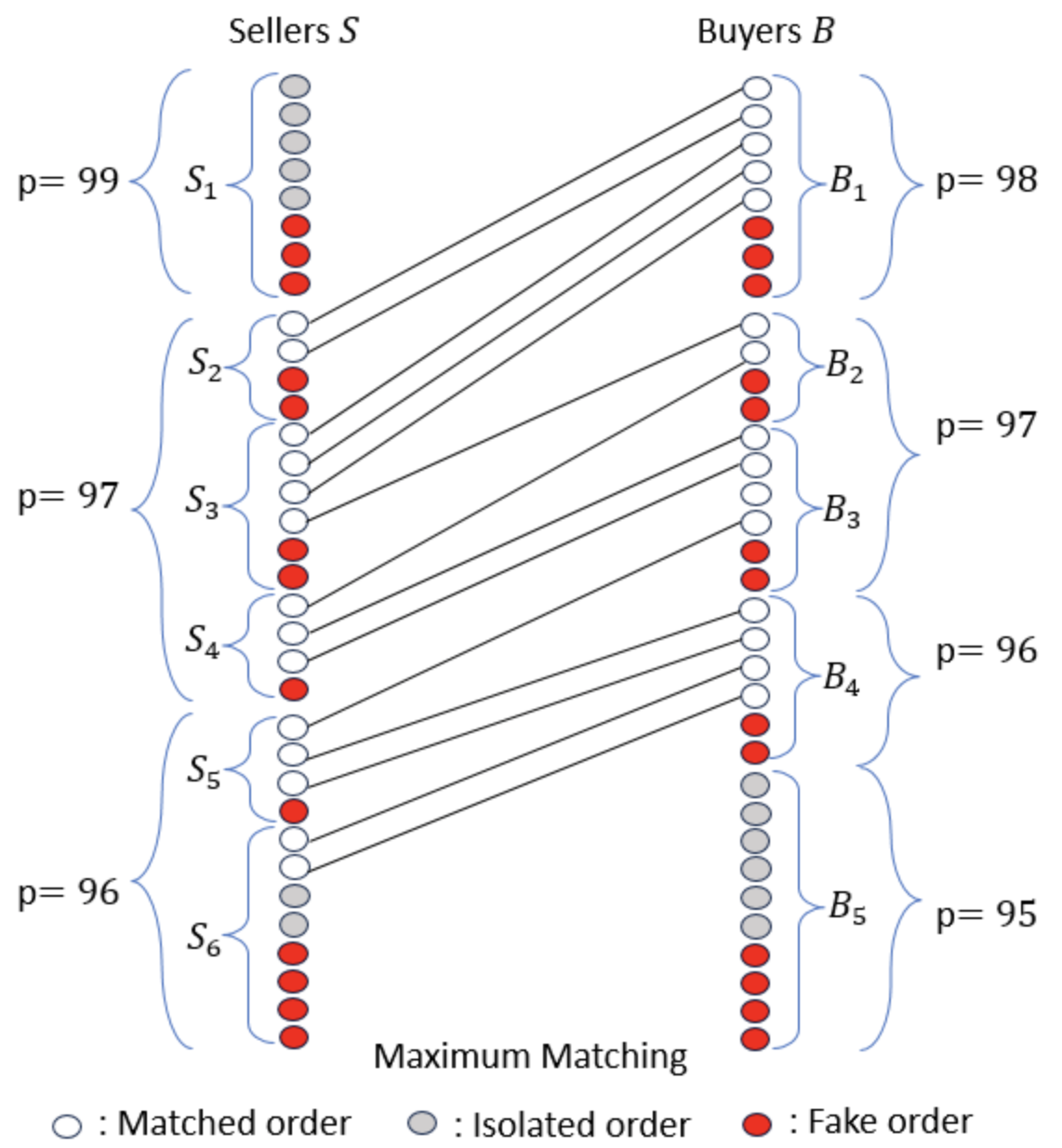}  
    \caption{\textbf{Maximum Matching Example.} \textnormal{The figure illustrates a bipartite graph and its maximum matching where nodes are first sorted by price, followed by the user's genuine orders and then their fake orders. This arrangement ensures optimal matching, maximizing the number of successful pairings according to algorithm~\ref{alg:matching}.}}
    \label{fig:auction}
\end{figure}

In particular, the algorithm constructs and maintains a bipartite graph, where edges exist between buy and sell nodes when their prices are compatible, i.e., the buying price is at least equal to the selling price. As matches are made, isolated nodes (the gray nodes in Figure~\ref{fig:auction})—those without neighbors, such as when their price cannot be met—are promptly removed from the graph. 
In the maximum matching problem, the goal is to identify a set of edges where no two edges share a node, thereby maximizing the total number of matches. As mentioned above, nodes with the most popular price on one side of the graph are naturally connected to nodes with the least popular price on the opposite side. These pairs, referred to as polar opposite nodes, form the basis of an iterative matching strategy that guarantees an optimal matching solution.

Our approach maintains this optimal matching even when users submit a number of noise-injected fake (the red nodes in Figure~\ref{fig:auction}) orders to obscure the true amounts of their order. To achieve this, we introduce  in-differential privacy (detailed in Section~\ref{sec:IDP}) and orders are not only sorted by price but also arranged such that each user’s true orders are followed by their corresponding fake orders. 

Unlike traditional differential privacy, our new in-differential privacy concept allows for selective disclosure of information post-match, aligning with realistic scenarios. For instance, in the context of a dark pool trading environment, it becomes acceptable to reveal the actual quantity of a trade once it has been fully matched, as it no longer poses a risk to privacy.
To establish this new privacy framework, we integrate graph refinement techniques, ensuring that the protocol not only protects user data but also facilitates an efficient and optimal matching process, even under the presence of noise.

\paragraph{Implementation:} We present an end-to-end implementation of our system. While previous FHE-based solutions processed fewer than one order per second, our system dramatically outperforms them, handling between 600 and 850 orders per second (according to Table~\ref{tab:comparison}), depending on the input volume. Furthermore, we provide an analysis of the overhead introduced by our privacy-preserving mechanism compared to the non-private version. While privacy inevitably incurs some cost and does not come for free, our system's overhead remains minimal and practical, making it highly suitable for high-frequency trading environments.

\subsection{Related Work}

The works of~\cite{CartlidgeSA19,CartlidgeSA21,MazloomDPB23} leverage secure multiparty computation (MPC) with multiple operators instead of a single one. While this method is computationally faster than fully homomorphic encryption (FHE), it comes with significant communication overhead due to the necessary interactions among multiple operators. Moreover, the practicality of this approach is limited, as most current dark pool systems operate with a single operator. Importantly, MPC can only guarantee the privacy of orders if the dark pool operators do not collude, raising concerns in scenarios where collusion is feasible. Given these challenges, it is crucial to focus on solutions that emphasize single-operator architectures, which can streamline communication and enhance privacy.

The work of Massacci et al.~\cite{massacci2018futuresmex} proposes a distributed market exchange for futures assets that features a multi-step functionality, including a dark pool component. Their experiments show that the system can support up to ten traders. Notably, their model does not conceal orders; instead, it discloses an aggregated list of all pending buy and sell orders, which sets it apart from our solution. Moreover, there are existing works proposing private dark pool constructions utilizing blockchain technology~\cite{bag2019seal,galal2021publicly,ngo2021practical}, our focus diverges from this area. Furthermore, all these solutions experience slowdowns due to the reliance on computationally intensive public key cryptographic mechanisms.

The work of Hsu et al.~\cite{DBLP:journals/siamcomp/HsuHRRW16} also considered a private matching problem under the notion of \emph{joint differential privacy}, where the view of the adversary consists of the output received by all users except the user whose privacy is concerned.
Since the notion is still based on the conventional approach of using divergence on the adversarial views for neighboring inputs, their privacy notion can only lead to an almost optimal matching. In contrast, our new notion can achieve the exact optimal matching.

The authors in~\cite{PolychroniadouC24} employ differential privacy in a distinct and simplified setting of volume matching~\cite{BalchDP20,GamaCPSA22,polychroniadou2023prime}, where prices are predetermined and fixed, to obfuscate aggregated client volumes and conceal the trading activity of concentrated clients. In their auction mechanism, the obfuscated aggregate volumes are published daily, enabling buyers to make informed matching decisions based on this publicly available inventory. 
\section{Preliminaries}
\label{sec:prelim}

We first describe the problem setting and the adversary model.

\noindent \textbf{Problem Setup.} The order of each user~$i$ has a type $\tau_i \in {\mathsf{buy}, \mathsf{sell}}$, a price~$p_i \in \mathbb{R}+$, and a \textbf{positive} quantity~$x_i \in \mathbb{Z}_+$. An input configuration $C := {(\tau_i, p_i, x_i): i \in [n]}$ consists of users' orders. One unit of a buy order can be matched to one unit of a sell order if the buying price is at least the selling price. The goal is to design matching protocols that maximize the total number of matched units while preserving the privacy of users' data.

\noindent \textbf{Adversarial Model.}
The server can observe the most amount of information,
which is modeled as an adversary that is semi-honest, i.e.,
it will follow the protocol and try to learn about users' data.
Observe that when one unit of a buy order is matched
to a sell order, the identities of both the buyer and the seller,
as well as their bidding prices, must be revealed.
Here are possible privacy concepts.

\begin{compactitem}

\item Given the bid $(\tau_i, p_i, x_i)$
of a user~$i$, the type~$\tau_i$ and the price~$p_i$ are public information,
but the number~$x_i$ of units is private.
However, the identity of the user~$i$ is known, and because it is placing a bid,
it is also known that $x_i \geq 1$.

We will develop a new privacy notion that only hides the value of~$x_i$
when the bid cannot be fully executed. This theoretical privacy guarantee
will mainly focus on this new notion.

\item In practice, perhaps a user may not even want its presence in the system
to be known if no unit of its bid is executed. We will later describe
how this can be easily achieved by encrypting the user id, but this is
not the main technical focus.

\end{compactitem}

\ignore{
In the differential privacy framework, we specify the neighboring relation that describes what scenarios the adversary cannot tell apart.
}

\begin{definition}[Neighboring Input Configurations] 
Two input configurations $C$ and $\widehat{C}$ are neighboring if except for one user~$i$, all orders of other users are identical, and for user~$i$, only the quantity may differ by at most 1, i.e. $|x_i - \widehat{x}_i| \leq 1$. We denote $C \sim_i \widehat{C}$ in this case. 
\end{definition}

We review some basic probability tools that are commonly used in differential privacy~\cite{DBLP:conf/icalp/Dwork06}. A divergence is used to quantify how close two distributions on the same sample space are to each other.

\begin{definition} [Symmetric Hockey-Stick Divergence] 
Given distributions $P$ and $Q$ on the same sample space $\Omega$ and $\gamma \geq 0$, the symmetric hockey-stick divergence is defined as:

$$ \HS_\gamma(P \| Q) := \sup_{S \subseteq \Omega} \max\{ Q(S) - \gamma \cdot P(S), P(S) - \gamma \cdot Q(S)\}.$$

\end{definition}

\begin{remark} The hockey-stick divergence is related to the well-known $(\epsilon, \delta)$-differential privacy inequality. 

Specifically, $\HS_{e^\epsilon} (P | Q) \leq \delta$ if and only if for all subsets $S \subseteq \Omega$, 

$Q(S) \leq e^\epsilon \cdot P(S) + \delta$ and $P(S) \leq e^\epsilon \cdot Q(S) + \delta$. \end{remark}

\begin{definition}[Differential Privacy] Suppose when a matching protocol is run on a configuration $C$, the adversary can observe some information that is denoted by some random object $\msf{View}(C)$. Then, the protocol is $(\epsilon, \delta)$-differentially private if for all neighboring input configurations $C$ and $\widehat{C}$,

$\HS_{e^\epsilon}(\msf{View}(C) | \msf{View}(\widehat{C})) \leq \delta$. \end{definition}

Note that the maximum number of matched units can differ by 1 for neighboring input configurations. Therefore, to use the conventional notion of differential privacy, the matching protocol cannot guarantee that an optimal number of matched units is returned. In fact, it is not hard to see that to achieve $(\epsilon, \delta)$-DP, the protocol will need to match the number of units that is about $O(\frac{1}{\epsilon} \log \frac{1}{\delta})$ smaller than the maximum possile value.

\ignore{
This is consistent with the fact that the truncated geometric distribution can be used to achieve $(\epsilon, \delta)$-DP for two quantities that differ by at most 1.
}

\noindent \textbf{Truncated geometric distribution.} Let $Z$ 
be an even integer,
and $\alpha > 1$. The truncated geometric distribution $\Geom^Z(\alpha)$ has support with the integers in $[0..Z]$ such that its probability mass function at $x \in [0..Z]$ is proportional to $\alpha^{-| \frac{Z}{2} - x |}$. Specifically, the probability mass function at $x \in [0..Z]$ is 
\[
\frac{\alpha - 1}{\alpha + 1 - 2 \alpha^{-\frac{Z}{2}}} \cdot \alpha^{-\left|\frac{Z}{2} - x \right|}.
\]

\begin{fact}[Geometric Distribution and DP]
\label{fact:geom_dp}
For $\epsilon > 0$ and $0 \leq \delta < 1$, 
suppose $N$ is a random variable with distribution $\Geom^Z(e^\epsilon)$,
where $Z \geq \ceil{\frac{2}{\epsilon} \ln \frac{1}{\delta}}$ is even.
Then, for any integer~$n$,

$\HS_{e^\epsilon}(n+N \| n+1 + N) \leq \delta$.
\end{fact}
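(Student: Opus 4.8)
The plan is to compute the symmetric hockey-stick divergence directly from the explicit probability mass functions, exploiting the fact that the ``privacy loss'' of a truncated geometric distribution takes only two values on the overlap of the two shifted supports, so that all of the divergence is concentrated at a single boundary point. Write $\alpha = e^\epsilon > 1$ and let $C = \frac{\alpha-1}{\alpha + 1 - 2\alpha^{-Z/2}}$ be the normalizing constant, so the pmf of $N$ at $x \in [0..Z]$ is $C\alpha^{-|Z/2 - x|}$. Let $f$ and $g$ denote the pmfs of $n + N$ and $n+1+N$ respectively, so that $f(k) = C\alpha^{-|Z/2 - (k-n)|}$ on $[n..n+Z]$ and $g(k) = C\alpha^{-|Z/2 - (k-n-1)|}$ on $[n+1..n+1+Z]$. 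Since the hypothesis $\delta < 1$ forces $Z \ge 2$, the two supports overlap on $[n+1..n+Z]$ and differ only at the two endpoints.

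First I would carry out the pointwise comparison. Setting $m = Z/2$ and $j = k - n$, an elementary case analysis of $|m-j| - |m+1-j|$ (which equals $-1$ for $j \le m$ and $+1$ for $j \ge m+1$) shows that the likelihood ratio $g(k)/f(k)$ equals $\alpha^{-1}$ for $j \le m$ and $\alpha$ for $j \ge m+1$ on the common support. Consequently $g(k) - \alpha f(k) \le 0$ and $f(k) - \alpha g(k) \le 0$ at every interior point; the only points where either quantity is strictly positive are the two exposed endpoints, namely $k = n$ (where $g(k)=0$ but $f(k) = C\alpha^{-m}$) and $k = n+1+Z$ (where $f(k)=0$ but $g(k) = C\alpha^{-m}$).

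Because the supremum over $S$ in the definition of $\HS$ is attained by including exactly those $k$ with positive integrand, the term $\sup_S\{Q(S) - \alpha P(S)\}$ is attained at $S = \{n+1+Z\}$ and equals $C\alpha^{-m}$, while $\sup_S\{P(S) - \alpha Q(S)\}$ is attained at $S = \{n\}$ and also equals $C\alpha^{-m}$. Hence $\HS_{e^\epsilon}(n+N \,\|\, n+1+N) = C\alpha^{-Z/2}$. It then remains to bound this quantity by $\delta$: the hypothesis $Z \ge \ceil{\frac{2}{\epsilon}\ln\frac1\delta}$ gives $\alpha^{-Z/2} = e^{-\epsilon Z/2} \le \delta$, and writing $t = \alpha^{-Z/2} < 1$ one checks $C\alpha^{-Z/2} = \frac{(\alpha-1)t}{\alpha + 1 - 2t} \le t$, since $\alpha - 1 \le \alpha + 1 - 2t$ is equivalent to $t \le 1$ while the denominator is positive because $t < 1$ gives $\alpha+1-2t > \alpha-1 > 0$. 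Combining, $C\alpha^{-Z/2} \le t \le \delta$.

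I expect the main obstacle to be purely bookkeeping: correctly handling the two boundary indices where the supports of $f$ and $g$ fail to overlap, since these (rather than the interior) are precisely where the divergence lives—the interior contributes nothing because the likelihood ratio is capped at $\alpha = e^\epsilon$ there by the symmetric design of the distribution. The concluding inequality $\frac{(\alpha-1)t}{\alpha+1-2t} \le t$ is a short algebraic check once the supremum has been localized to a single point.
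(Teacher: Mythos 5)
Your proposal is correct. Note that the paper itself states Fact~\ref{fact:geom_dp} without proof, treating it as a standard property of the truncated geometric mechanism, so there is no in-paper argument to compare yours against; your direct computation is a valid, self-contained verification. The key steps all check out: the normalizing constant $C=\frac{\alpha-1}{\alpha+1-2\alpha^{-Z/2}}$ agrees with the paper's pmf; on the common support the case analysis of $|m-j|-|m+1-j|$ does give likelihood ratio $\alpha^{-1}$ for $j\le m$ and $\alpha$ for $j\ge m+1$, so no interior point contributes positively to either hockey-stick term; the only positive contributions are the exposed endpoints $k=n$ and $k=n+1+Z$, each of mass exactly $C\alpha^{-Z/2}$; and the closing inequalities $C\le 1$ (equivalent to $\alpha^{-Z/2}\le 1$, with positive denominator since $\alpha^{-Z/2}<1$) and $e^{-\epsilon Z/2}\le\delta$ (from $Z\ge\frac{2}{\epsilon}\ln\frac{1}{\delta}$) are both right. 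Your observation that $\delta<1$ forces $Z\ge 2$, so the two shifted supports genuinely overlap and $\alpha^{-Z/2}<1$, is a hypothesis you actually use and handle correctly. As a bonus, your argument pins down the divergence exactly, $\HS_{e^\epsilon}(n+N\,\|\,n+1+N)=C\alpha^{-Z/2}$, showing the stated bound $\delta$ holds with slack given by the factor $C\le 1$.
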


\subsection{Indifferential Privacy: A Relaxed Notion}\label{sec:IDP}

\noindent \textbf{High-Level Goal.} We would like to design a protocol that always returns the optimal number of matched units, but we will relax the conventional notion of differential privacy such that the bidding quantity~$x_i$ of a user~$i$ does not need privacy protection if the order is fully executed.

\noindent \textbf{Closest Refinement Pair.} To describe this new privacy notion, we will need some technical notation. The recent work~\cite{DBLP:journals/corr/abs-2406-17964} considers finite sample spaces which are sufficient for our purposes.

\begin{definition}[Refinement]
Suppose $\Omega_0$ and $\Omega_1$ are sample spaces
and $\mcal{F} \subseteq \Omega_0 \times \Omega_1$ is a binary relation,
which can also be interpreted as a bipartite graph $(\Omega_0 \cup \Omega_1, \mcal{F})$.
Given a distribution $\mathsf{P}_0$ on $\Omega_0$,
a refinement $\widehat{\mathsf{P}}_0$ of $\mathsf{P}_0$ (with respect to $\mcal{F}$)
is a distribution on $\mcal{F}$ such that
for every $i \in \Omega_0$, $\mathsf{P}_0(i) = \sum_{j: (i,j) \in \mcal{F}} \widehat{\mathsf{P}}_0(i,j)$.

A refinement for a distribution on $\Omega_1$ is a distribution on $\mcal{F}$ defined analogously.
\end{definition}

\begin{figure}[hbt!]
    \centering
    \includegraphics[width=\columnwidth]{./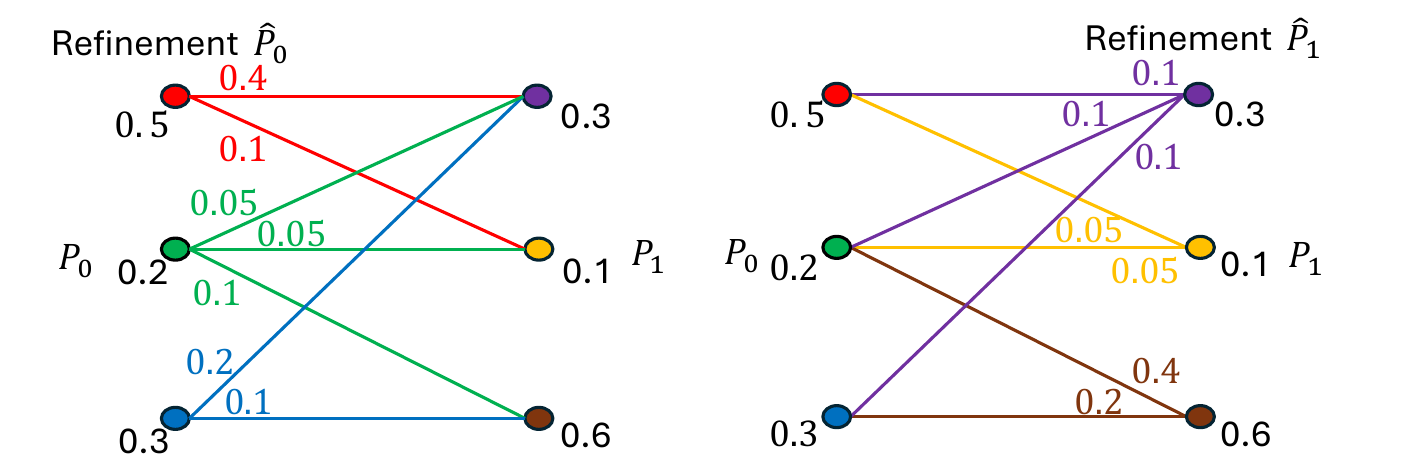}  
    \caption{\textbf{Refinement Pair.} \textnormal{The figure
		shows an example of a bipartite graph $(\Omega_0, \Omega_1; \mcal{F})$, where each node has a probability mass.  Each subfigure shows
		the probability distribution refinement on each side.}}
    \label{fig:refinement}
\end{figure}

\begin{definition}[Closest Refinement Pair]
Given distributions $\mathsf{P}_0$ and $\mathsf{P}_1$
on sample spaces that are equipped with some relation $\mcal{F}$
and a divergence notion $\mathsf{D}$,
the divergence between $\mathsf{P}_0$ and $\mathsf{P}_1$ with respect to $\mcal{F}$
is defined to be:

$$\mathsf{D}^{\mcal{F}}(\mathsf{P}_0 \| \mathsf{P}_1) =
\inf_{(\widehat{\mathsf{P}}_0, \widehat{\mathsf{P}}_1)} \mathsf{D}(\widehat{\mathsf{P}}_0 \| \widehat{\mathsf{P}}_1),$$

where the infimum is taken over all refinements 
$\widehat{\mathsf{P}}_0$ and $\widehat{\mathsf{P}}_1$
of $\mathsf{P}_0$ and $\mathsf{P}_1$, respectively.
\end{definition}

\begin{remark}
As we shall see, in our dark pool application,
the relation $\mcal{F}$ in $\Omega_0 \times \Omega_1$,
is actually much simpler.
In particular, each node from one side
has at most one neighbor on the other side.
Hence, for the distribution on each side, there is only one possible
refinement.
Even though the concept of closest refinement pair is not needed for this application,
we still give a more general Definition~\ref{defn:idp} which may be relevant
in future applications.
\end{remark}

\begin{fact}[Universal Closest Refinement Pair~\cite{DBLP:journals/corr/abs-2406-17964}]
Given the above distributions 
 $\mathsf{P}_0$ and $\mathsf{P}_1$, together with the relation $\mcal{F}$ on
their corresponding sample spaces,
there exists a universal refinement pair
$(\widehat{\mathsf{P}}_0, \widehat{\mathsf{P}}_1)$ that minimizes
$\mathsf{D}(\widehat{\mathsf{P}}_0 \| \widehat{\mathsf{P}}_1)$ for all
divergences $\mathsf{D}$ satisfying the data-processing 
inequality\footnote{
A divergence $\mathsf{D}$ satisfies
the data processing inequality if for 
any pair of joint distributions $(X_0, Y_0)$ and $(X_1, Y_1)$,
the corresponding marginal distributions satisfy
$\mathsf{D}(X_0 \| X_1)
\leq \mathsf{D}((X_0, Y_0) \| (X_1, Y_1))$.
} 
(which includes the hockey-stick divergence $\HS$).
\end{fact}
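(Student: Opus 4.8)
The plan is to convert the statement---which quantifies over the whole family of data-processing divergences---into one combinatorial domination fact. The key reduction is this: a fixed feasible pair $(\widehat{\mathsf{P}}_0^{\ast},\widehat{\mathsf{P}}_1^{\ast})$ minimizes \emph{every} divergence $\mathsf{D}$ obeying the data-processing inequality as soon as it is a common \emph{garbling} of all feasible pairs, i.e.\ for each refinement pair $(\widehat{\mathsf{P}}_0,\widehat{\mathsf{P}}_1)$ there is a single stochastic map $T$ on the edge set $\mcal{F}$ with $T\widehat{\mathsf{P}}_0=\widehat{\mathsf{P}}_0^{\ast}$ and $T\widehat{\mathsf{P}}_1=\widehat{\mathsf{P}}_1^{\ast}$. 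Applying the data-processing inequality to this one $T$ yields $\mathsf{D}(\widehat{\mathsf{P}}_0^{\ast}\|\widehat{\mathsf{P}}_1^{\ast})=\mathsf{D}(T\widehat{\mathsf{P}}_0\|T\widehat{\mathsf{P}}_1)\le\mathsf{D}(\widehat{\mathsf{P}}_0\|\widehat{\mathsf{P}}_1)$ simultaneously for all such $\mathsf{D}$. Thus the entire claim follows once the feasible set of refinement pairs is shown to contain a Blackwell-least element, and the real task is to construct one.

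To produce such an element I would argue at the level of the hockey-stick divergences. By the Blackwell--Sherman--Stein theorem, a garbling channel between two dichotomies on $\mcal{F}$ exists exactly when their hockey-stick curves are ordered; hence it suffices to exhibit one pair whose curve $\gamma\mapsto\HS_\gamma(\widehat{\mathsf{P}}_0\|\widehat{\mathsf{P}}_1)$ is pointwise minimal over the feasible set, that is, a pair minimizing $\HS_\gamma$ for every $\gamma\ge 0$ with the \emph{same} choice. Such a pair is automatically a garbling of every feasible pair, so paragraph~1 then closes the argument. To build it, I would use that the feasible set is a product of two transportation polytopes (the refinements of $\mathsf{P}_0$ and of $\mathsf{P}_1$ are selected independently) and that both the objective and the relation decompose over the connected components of $(\Omega_0\cup\Omega_1,\mcal{F})$, so one works one component at a time; within a component the candidate is obtained by a monotone, water-filling coupling that makes the two edge-measures as proportional as the prescribed marginals permit, in the spirit of the maximal coupling that optimizes total variation and all $f$-divergences at once.

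The main obstacle is precisely this simultaneous optimality---equivalently, the existence of a Blackwell-least element of the polytope. Since the Blackwell order is only partial, a single pair lying below the entire feasible set is a special phenomenon and must be produced explicitly from the bipartite/marginal structure rather than by a generic compactness argument; the crux is to take an arbitrary feasible pair and write down the \emph{one} channel $T$ that transports $\widehat{\mathsf{P}}_0$ and $\widehat{\mathsf{P}}_1$ onto the canonical pair at the same time, and to verify both its stochasticity and the two marginal identities. Finiteness of the sample spaces, assumed in the cited framework, is what guarantees that the minimizers exist and that the component-wise construction terminates; once the channel is in hand, the data-processing inequality upgrades minimality from the $\HS_\gamma$ family to all data-processing divergences, completing the proof.
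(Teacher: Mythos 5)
First, a point of reference: the paper does not prove this Fact at all --- it is imported verbatim from the cited work~\cite{DBLP:journals/corr/abs-2406-17964}, so there is no in-paper argument to compare you against. Your proposal therefore has to stand on its own, and judged that way it is a proof \emph{plan}, not a proof. Your first reduction is sound and is the standard one: if a feasible pair $(\widehat{\mathsf{P}}_0^{\ast},\widehat{\mathsf{P}}_1^{\ast})$ is obtained from every feasible pair $(\widehat{\mathsf{P}}_0,\widehat{\mathsf{P}}_1)$ by a single stochastic map $T$ applied to both coordinates, then channel monotonicity gives simultaneous minimality for every data-processing divergence; and for dichotomies the Blackwell--Sherman--Stein criterion does let you certify such a $T$ by pointwise domination of the hockey-stick curves. (One caveat even here: the Fact's footnote defines the data-processing inequality in the \emph{marginalization} form $\mathsf{D}(X_0\|X_1)\leq \mathsf{D}((X_0,Y_0)\|(X_1,Y_1))$, and your step $\mathsf{D}(T\widehat{\mathsf{P}}_0\|T\widehat{\mathsf{P}}_1)\leq \mathsf{D}(\widehat{\mathsf{P}}_0\|\widehat{\mathsf{P}}_1)$ is the \emph{channel} form; passing from the former to the latter needs the extra property that adjoining randomness generated by the same kernel on both sides leaves $\mathsf{D}$ unchanged, which holds for $\HS$ and $f$-divergences but is not part of the stated definition. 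This should be addressed, not elided.)

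The genuine gap is that the entire mathematical content of the theorem --- the \emph{existence} of a Blackwell-least element of the feasible polytope --- is exactly the step you defer. You name it yourself as ``the main obstacle'' and ``the crux,'' but the proposal never resolves it: the candidate pair is never defined (``a monotone, water-filling coupling \ldots in the spirit of the maximal coupling'' is a gesture, not a construction), the channel $T$ is never written down, its stochasticity and the two marginal identities are never verified, and the claimed pointwise minimality of the hockey-stick curves is never proved. Compactness gives, for each fixed $\gamma$, \emph{some} minimizer of $\HS_\gamma$ over the polytope, but a priori the minimizer sets for different $\gamma$ could be disjoint; ruling that out is equivalent (by your own BSS reduction) to the theorem itself, so nothing has been proved. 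A secondary unsupported step: you assert that the problem decomposes over connected components of $(\Omega_0\cup\Omega_1,\mcal{F})$, but an arbitrary divergence satisfying only the data-processing inequality need not be additive over such a partition (note that the two refinements can place \emph{different} total masses on the same component), so even the component-by-component strategy requires justification. As it stands, the proposal correctly frames the problem and then stops where the cited theorem's actual work begins.
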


\noindent \textbf{Intuition.} To fit the above notation to
our problem, let us consider a pair of neighboring input configurations
$C_0$ and $C_1$, in which the quantities of some user~$i$ differ by 1.
For instance, in $C_0$, the user~$i$ has $n_0 > 0$ units, while in $C_1$,
the user has $n_1 = n_0 + 1$ units.

For those two input configurations,
the corresponding sample spaces of the adversary's views
are $\Omega_0$ and $\Omega_1$, which may not be the same,
because the supports of the views for the two configurations may be different.

Next, let us look at each point in the sample space more carefully.
As the protocol is being executed, the adversary gains more information step-by-step.
Hence, each point~$\sigma$ in the sample space is
a time-series sequence $\sigma = (\sigma_1, \sigma_2, \ldots)$.

\begin{definition}[Indifference Relation]
\label{defn:indifference}
Given the sample spaces $\Omega_0$ and $\Omega_1$ of
the adversary's views corresponding
to neighboring input configurations~$C_0$ and $C_1$ as above,
we define an indifference relation $\mcal{F} \subseteq \Omega_0 \times \Omega_1$
as follows, where a pair of time sequences $(\sigma^{(0)}, \sigma^{(1)}) \in \mcal{F}$ is related \emph{iff}

\begin{compactitem}

\item  The two sequences $\sigma^{(0)} = \sigma^{(1)}$ are the same; or

\item The two time sequences $\sigma^{(0)}$ and $\sigma^{(1)}$
have a maximal common prefix $\rho$ such that $\rho$ implies that
$n_0$ units of user~$i$'s order have been matched, i.e.,
in $C_0$, the order of user~$i$ is fully executed.
\end{compactitem}

\end{definition}

With the mathematical concepts formalized,
we are ready to introduce our new notion of privacy.

\begin{definition}[Indifferential Privacy (IDP)]
\label{defn:idp}
Suppose when a matching protocol is run on a configuration $C$,
the adversary can observe some information that is denoted by 
some random object $\msf{View}(C)$.
Then, the protocol is $(\epsilon, \delta)$-indifferentially private (IDP)
if for all neighboring input configurations $C$ and $\widehat{C}$
with the appropriate indifference relation~$\mcal{F}$ defined,

$\HS^{\mcal{F}}_{e^\epsilon}(\msf{View}(C) \| \msf{View}(\widehat{C})) \leq \delta$.
\end{definition}

\noindent \textbf{Intuition.}
In Definition~\ref{defn:indifference},
if we only have the first bullet $\sigma^{(0)} = \sigma^{(1)}$,
then this reduces to the usual differential privacy.
The second bullet says the indifference relation specifies when 
the user~$i$ is indifferent
about whether the adversary can distinguish between the 
two views in a pair $(\sigma^{(0)}, \sigma^{(1)}) \in \mcal{F}$.

In our application scenario, this captures the idea
that when a user's order is fully executed, then the privacy of its bidding quantity
no longer needs protection.

\noindent \textbf{Information Theoretic vs Computational IDP.}
When we describe our protocol,
we will use the ideal functionality of cryptographic
primitives and prove the privacy under
Definition~\ref{defn:idp} that is information theoretic.
If we replace the ideal functionality with
the real-world cryptographic construct,
we can achieve the following notion of
computational IDP that is analogous to SIM-CDP 
introduced in Mironov et al.~\cite{DBLP:conf/crypto/MironovPRV09}.

\begin{definition}[Computational IDP (CIDP)]
\label{defn:cidp}
A protocol $\Pi$ is $(\epsilon, \delta)$-CIDP
if there exists a protocol $\Pi'$ that
is $(\epsilon, \delta)$-IDP such that
$\Pi$ and $\Pi'$ are computationally indistinguishable.
\end{definition}

\noindent \textbf{Technical Focus.}
In this work, we focus on showing
our protocol using ideal functionality of cryptographic primitives satisfies Definition~\ref{defn:idp}.
It is straightforward to apply standard hybrid arguments~\cite{DBLP:conf/crypto/MironovPRV09}
to replace an ideal functionality with the real-world cryptographic
primitive to show that the resulting protocol
satisfies Definition~\ref{defn:cidp}.

\ignore{

\begin{definition}[Multiplicative Neighboring Relation]
For $\rho \geq 0$, two positive numbers $x$ and $x'$ are $\rho$-multiplicative neighboring, if
$| \ln \frac{x}{x'}| \leq \rho$.

\end{definition}

\begin{definition}[Truncated Laplace Distribution]
The truncated Laplace distribution $\mathsf{Lap}(\mu, b, C)$
has a support in $[\mu - C, \mu + C]$
and for $t \in [\mu - C, \mu + C]$,
its probability density function is proportional to: $\exp(- \frac{|t - \mu|}{b})$.
\end{definition}

\begin{fact}
Let $\epsilon \geq 0$ and $0 < \delta < 1$.
Suppose $x$ and $x'$ are $\rho$-multiplicative neighboring,
and $Z$ is a random variable have truncated Laplace distribution
$\mathsf{Lap}(\mu, b, C)$,
where $b = \frac{\rho}{\epsilon}$ and $\mu = C = b \ln \frac{1}{\delta}$.

Then, the two distributions $x \cdot \exp(Z)$ and $x' \cdot \exp(Z)$ are $(\epsilon, \delta)$-close.

$\exp(Z) \in [1, (\frac{1}{\delta})^{ \frac{2 \rho}{\epsilon} }]$

Social welfare suffers by a factor 
of $(\frac{1}{\delta})^{ \frac{4 \rho}{\epsilon}}$
\end{fact}

}

\ignore{

\begin{definition}[Conditional Differential Privacy]
A mechanism $\mathsf{M}$
is $(\epsilon, \delta)$-conditional differentially private if
for all neighboring input configurations $C \sim_i \widehat{C}$ such that
some specified condition $\mathcal{U}_i$ 
(that may depend on~$i$) has non-zero probability in both scenarios,
we have for any subset $S$ of outputs,

$\Pr[\mathsf{M}(C) \in S \, | \mathcal{U}_i]
\leq e^\epsilon \cdot \Pr[\mathsf{M}(\widehat{C}) \in S \, | \mathcal{U}_i] + \delta$.

\end{definition}
}

\section{Indifferentially Private Matching Protocol}
\label{sec:algo}

We describe the components of our private matching protocol.

\noindent \textbf{Bid Type.}
For a user~$i$ with $\mathsf{Id}_i$, its bid has a type~$\tau_i \in \{\mathsf{Buy}, \mathsf{Sell}\}$ with price $p_i$
and quantity $x_i \in \Z_{\geq 0}$.

\noindent \textbf{Bid Submission.}
Given the bid $(\mathsf{Id}_i, \tau_i, p_i, x_i)$
from user~$i$, 
suppose a user wants to achieve $(\epsilon, \delta)$-IDP.
Then,
it performs the following during bid submission.
\begin{enumerate}

\item Sample non-negative noise $N_i \in \Z$ 
according to the truncated geometric distribution $\Geom^Z(e^\epsilon)$,
for some even $Z \geq \ceil{\frac{2}{\epsilon} \ln \frac{1}{\delta}}$.

\item Create $y_i := x_i + N_i$ nodes of type $(\tau_i, p_i)$,
where the type is known to the server.

Out of the $y_i$ nodes, $x_i$ are real nodes and $N_i$ are fake nodes.
Each node contains encrypted information of whether it is real or fake.

The list of created nodes are sent to the server, where all real nodes
appear before the fake nodes.  
Observe that the server cannot distinguish between
the real and the fake nodes.
\end{enumerate}

\noindent \textbf{Cryptographic Assumptions.}  Depending on what security guarantee of the final protocol is needed, 
we can assume different properties of the ciphertext.

\begin{itemize}

\item To achieve just our notion of indifferential privacy,
the encrypted information of a node can be achieved by 
a non-malleable commitment scheme~\cite{DBLP:journals/jacm/LinP15}.
In practice, we can also use a hash function such as AES.

\textbf{Ideal Commitment Scheme.}
For ease of exposition, we describe our protocol using
an ideal commitment scheme.  A user creates a commitment of a message,
which is an opaque object whose contents the adversary cannot observe.
At a later time, the user can choose to open the commitment, which
must return the original message. In particular, a cryptographic commitment scheme has these key properties: (1) Hiding: The value stays secret until revealed.
(2) Binding: The committed value can't be changed.
(3) Unforgeability: No one can forge or tamper with the commitment.
(4) Correctness: The committed value is always correctly revealed.

\item Assuming the shuffler model, the server does not know
which nodes come from which user.  Hence, 
in this case, each node also contains encrypted information about the user ID.
However, the server still knows which nodes originate from the same user,
and the nodes from the same user are sorted in a list such that real nodes appear 
first before the fake nodes.
\end{itemize}

\noindent \textbf{Node Popularity.}
In any case, the server can create an ordering on the nodes on the buyer and the seller sides
according to the price.
For a buy node, a higher price is more popular; for a sell node, a lower price is more popular. 
On each side, the most or the least popular price is known as an \emph{extreme} price.

\noindent \textbf{Matching Graph.}
The algorithm maintains a bipartite graph
 $G = (\mcal{B}, \mcal{S}; E)$,
where there is an edge between a buy and a sell node
if the prices are compatible, i.e., the buying price
is at least the selling price.
As some nodes are matched during the process,
we assume that the algorithm will immediately remove any isolated node
that no longer has a neighbor.

\noindent \textbf{Maximum Matching Problem.}
Given a bipartite graph $G = (\mcal{B}, \mcal{S}; E)$,
a matching $M \subseteq E$ is a subset of edges
such that no two edges in $M$ are incident on the same node.
The \emph{maximum matching problem} aims to find a matching
$M$ with the maximum number of edges.  We next describe a
strategy to find a maximum matching on a bipartite graph
induced by buyer and seller nodes.

\noindent \textbf{Polar Opposite.}
Suppose the current matching graph $G$ has no isolated nodes.
For an arbitrary side, 
observe that any node on this side
with the most popular price has an edge connected to
any node on the other side 
with the least popular price.
We say that two such nodes from the two sides are \emph{polar opposite} of each other.
The following fact shows that an optimal matching can be returned by iteratively matching
polar opposites.

\begin{fact}
\label{fact:basic_matching}
Suppose a matching graph $G$ has no fake or isolated nodes,
and $u$ and $v$ are any two nodes from the different sides that
are polar opposite of each other.
Then, there exists a maximum matching in $G$ in which $u$ and $v$ are matched.
\end{fact}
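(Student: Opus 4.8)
The plan is to use a matching-exchange argument. I would start from an arbitrary maximum matching $M$ of $G$ and transform it, without changing its cardinality, into a maximum matching that contains the edge $(u,v)$. Since the polar-opposite observation already guarantees $(u,v) \in E$, the only work is to relocate $u$ and $v$ so that they are matched to each other while keeping the matching maximum. I would then conclude by noting that any matching $M'$ with $|M'| = |M|$ and $(u,v) \in M'$ is itself a maximum matching of the desired form.

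First I would dispose of the easy cases. If $(u,v) \in M$ there is nothing to do. If both $u$ and $v$ are unmatched by $M$, then $M \cup \{(u,v)\}$ is a strictly larger matching, contradicting maximality; hence at least one of $u,v$ is matched. If exactly one of them is matched, say $u$ is matched to some $v'$ and $v$ is free, then $M' := (M \setminus \{(u,v')\}) \cup \{(u,v)\}$ is a matching of the same cardinality as $M$ that contains $(u,v)$, and the mirror situation ($v$ matched, $u$ free) is identical. These cases require no compatibility check beyond $(u,v) \in E$.

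The main case, and the crux of the argument, is when $u$ and $v$ are matched by distinct edges, say $u$ to $v'$ and $v$ to $u'$ with $u' \neq u$ and $v' \neq v$. Here I would perform the swap $M' := (M \setminus \{(u,v'),(u',v)\}) \cup \{(u,v),(u',v')\}$, which preserves cardinality and contains $(u,v)$, provided the new edge $(u',v')$ is present in $G$. Establishing $(u',v') \in E$ is the key step, and it is exactly where the polar-opposite hypothesis enters. Assuming (without loss of generality, by the buyer/seller symmetry in the definitions of ``most'' and ``least popular'') that $u$ is a buyer with the highest buy price and $v$ is a seller with the highest sell price, note that $u'$ is then a buyer and $v'$ a seller, so $(u',v')$ is an eligible buyer--seller pair. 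I would argue compatibility via a price chain: the matched edge $(u',v)$ forces $p_{u'} \geq p_v$, while $v$ being the least popular seller (highest sell price) gives $p_v \geq p_{v'}$; combining, $p_{u'} \geq p_v \geq p_{v'}$, so $u'$ and $v'$ are compatible and $(u',v') \in E$. The other polar-opposite configuration ($u$ a seller with the lowest sell price, $v$ a buyer with the lowest buy price) follows by the mirror-image chain.

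In every case the produced matching $M'$ satisfies $|M'| = |M|$ and contains $(u,v)$; since $M$ is maximum, $M'$ is a maximum matching in which $u$ and $v$ are matched, as claimed. I expect the only genuine obstacle to be the compatibility verification $(u',v') \in E$ in the both-matched case, where one must invoke the extremeness of the polar-opposite price exactly once; the remaining steps are routine matching bookkeeping, and the ``no isolated nodes'' hypothesis is used only to justify the already-stated fact that $(u,v) \in E$.
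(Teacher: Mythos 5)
Your proposal is correct and takes essentially the same route as the paper: an exchange argument on an arbitrary maximum matching $M$, with the crux being the price chain $p_{u'} \geq p_v \geq p_{v'}$ (extremeness of the polar-opposite prices) that makes the partner swap $(u,v'),(u',v) \to (u,v),(u',v')$ legal in the both-matched case. The only cosmetic difference is bookkeeping: the paper first reduces to ``$u$ is matched'' by letting $u$ replace any matched node on its side, whereas you dispose of the both-unmatched case directly by maximality of $M$; the substance is identical.
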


\begin{proof}
Without loss of generality, suppose $u$ has the most popular price
on its side, while $v$ has the least popular price on its side.
Suppose $M$ is a maximum matching in which $u$ and $v$ are not matched
to each other.  We will modify $M$ without decreasing the matching size such that $u$ and $v$
are matched to each other in the following steps.

\begin{enumerate}

\item If $u$ is not matched to any node, we can make $u$ replace
any other node on its side, because it has the most popular price.

Hence, we may assume that $u$ is matched in $M$.

\item If $v$ is not matched in $M$, then we can replace the partner of~$u$
with $v$ to make $u$ and $v$ matched to each other.

\item Otherwise, we have $u$ is matched to some $v'$ and $v$ is matched to some $u'$.
Observe that if $u'$ is compatible with the least popular price on the other side,
then it must be compatible with $v'$.  Hence, $u$ and $u'$ can swap partners
such that $u$ and $v$ become matched.
\end{enumerate}
\end{proof}


\noindent \textbf{Matching Procedure with Fake Nodes.}
With our assumption, for each node type,
we assume that the nodes are sorted such that
the nodes from the same user are together and its real nodes appear first.

In each step, the server picks nodes $u$ and $v$ from the two sides
that are polar opposite of each other.
The owners of nodes $u$ and $v$ will participate in the following protocol:

\begin{enumerate}

\item If the node of an owner is real,
the owner will open its commitment to reveal the node is real.

\item If the node of an owner is fake,
this implies that all real nodes of that owner are already
matched and the owner will reveal all its fake nodes.

\item If both the nodes $u$ and $v$ are opened to be real,
the nodes $u$ and $v$ are mathced and removed from the matching graph.

\item If one of the nodes is real and the other node is fake,
the real node will be considered in the next iteration if it still has
a polar opposite in the remaining graph.

\end{enumerate}

\noindent \textbf{Correctness Intuition.}  The protocol
is outlined in Algorithm~\ref{alg:matching};
note that any arbitrary unspecified choice made by the algorithm
can either be randomized or deterministic according to some additional rules.

\begin{lemma}
Algorithm~\ref{alg:matching} returns
a maximum matching between real buy and sell nodes in the matching graph.
\end{lemma}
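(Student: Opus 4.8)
The plan is to show that the algorithm, although it runs on the full matching graph $G$ containing both real and fake nodes, effectively executes the polar-opposite greedy strategy on the subgraph $G_r$ induced by the real nodes alone, and therefore returns a maximum matching of $G_r$ by repeated application of Fact~\ref{fact:basic_matching}. First I would record the easy consequence of Fact~\ref{fact:basic_matching} that on a graph with no fake nodes, iteratively choosing a polar-opposite pair $(u,v)$, matching them, and recursing on $G \setminus \{u,v\}$ yields a maximum matching: Fact~\ref{fact:basic_matching} supplies a maximum matching $M$ containing $(u,v)$, and then $M \setminus \{(u,v)\}$ is a maximum matching of $G \setminus \{u,v\}$ (any matching of $G \setminus \{u,v\}$ extends by the edge $(u,v)$, so the two optima differ by exactly one), so an induction on the number of nodes closes this base claim. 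The bulk of the work is then to argue that the fake nodes never disturb this greedy process on the real nodes.

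The key structural observation is that extreme prices are preserved under passing to the real subgraph. If, in the current graph $G$, the node $u$ chosen by the algorithm on one side has the most popular (most extreme) price and $u$ happens to be real, then since $u$'s price is globally extreme on that side it is in particular extreme among the real nodes, so $u$ is still a most-popular node of the current real subgraph; the symmetric statement holds for the least-popular node $v$. Because an edge between a buy and a sell node depends only on the two prices, a pair that is polar opposite in $G$ remains polar opposite in the real subgraph whenever both endpoints are real, and is in particular connected there. Consequently, whenever the algorithm reaches case~3 and matches two real nodes, those nodes form a genuine polar-opposite pair of the current real subgraph, exactly the move of the greedy strategy above (after discarding real nodes that are isolated within the real subgraph, which can never be matched and do not affect the maximum matching size, so that Fact~\ref{fact:basic_matching} applies to the non-isolated part).

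It then remains to check that the fake-node cases neither match a fake node to a real one nor discard any real matching opportunity. Here I would use the ordering guarantee that the nodes of each user share a single price and that a user's real nodes precede its fake nodes, together with the tie-breaking rule selecting real before fake among nodes of the same extreme price. If the algorithm selects a fake node $v$ of owner $j$ as an extreme node, then $j$'s real nodes, which sit at the same extreme price and earlier in $j$'s list and which are non-isolated whenever $v$ is, must already have been matched and removed; this justifies step~2 and lets the owner safely open and delete all of $j$'s fakes. Removing these fakes alters only non-real structure, hence leaves $G_r$ untouched, and the surviving real node in case~4 is returned to the pool and reconsidered. One iterates this filtering until the current extreme pair is either both real (a greedy move) or the real node becomes isolated in the real subgraph, and an induction on the number of removed nodes shows that the sequence of real-real matches produced is precisely a polar-opposite greedy matching of $G_r$, hence maximum.

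The main obstacle I anticipate is the bookkeeping in the third paragraph: making rigorous the claim that selecting a fake extreme node forces all of its owner's real nodes to be already matched (which rests on the same-price-per-user and real-before-fake ordering plus the tie-breaking convention), and carefully treating the corner case of a real node that is isolated in $G_r$ yet still has a fake neighbor in $G$, so that its eventual removal or non-matching is consistent with optimality on $G_r$. Everything else is a clean induction built on Fact~\ref{fact:basic_matching}.
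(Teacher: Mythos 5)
Your proposal is correct and follows essentially the same route as the paper's proof: both reduce to the fake-free case by observing that fake nodes are removed as soon as they are encountered, so the real-node matching behaves exactly as in Fact~\ref{fact:basic_matching} applied iteratively. The paper states this reduction in three sentences, whereas you work out the details it leaves implicit --- the induction turning Fact~\ref{fact:basic_matching} into correctness of the polar-opposite greedy, the preservation of extreme prices when passing to the real subgraph, and the per-owner ordering argument that makes the removal of inferred fake nodes sound --- all of which are accurate.
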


\begin{proof}
Comparing with the case with no fake nodes,
observe that whenever a fake node is encountered,
it will be removed immediately.  Hence, the matching behavior
is exactly the same as the case with no fake nodes.
From Fact~\ref{fact:basic_matching},
a maximum matching is returned.
\end{proof}

\begin{algorithm}[H]
\caption{Matching Protocol with Fake Nodes} \label{alg:matching}
\SetAlgoLined

Matching graph $G \gets (\mcal{B}, \mcal{S}; E)$

$u \gets \msf{NULL}$

$M \gets \emptyset$

\While{$E \neq \emptyset$}{

    \If{$u = \msf{NULL}$}{
			From any side and any extreme price on that side, select the next node~$u$.
		}
		
		Select the next node~$v$ that is a polar opposite to~$u$.

		The algorithm announces that an attempt is made to match $u$ and $v$.
		
		The owners of the two nodes open their commitments (if they have not already done so) and reveal
		whether the nodes are real or fake.

		The algorithm infers that if an owner reveals a fake node,
		all the remaining nodes by the same owner are also fake.

    \eIf{either $u$ or $v$ is fake}{
        			
        Any fake node and its incident edges are removed from $G$.
    }{
        \If{both $u$ and $v$ are real}{
            					
            The pair $(u,v)$ is matched and added to $M$;
						remove $u$ and $v$ from $G$.
						
        }
    }
    Remove any isolated node from $G$.
		
		If exactly one of $u$ and $v$ is still unmatched and remains
		in the graph, set $u$ to be that node; otherwise,
		set $u$ to $\msf{NULL}$.
				
}

	\textbf{return} matching $M$
\end{algorithm}

\subsection{Privacy Analysis}

We show that our dark pool auction protocol satisfies IDP.

\ignore{

We first recall a result the generalizes Fact~\ref{fact:geom_dp}.
\begin{fact}
Suppose $P$ and $Q$ are two distributions on a finite sample space~$\Omega$
such that the following conditions hold.

\begin{compactitem}

\item  Denote $S_P := \{\omega \in \Omega: P(\omega) > 0 \wedge Q(\omega) = 0\}$
and $S_Q := \{\omega \in \Omega: P(\omega) = 0 \wedge Q(\omega) > 0\}$.
We have $\max \{P(S_P), Q(S_Q)\} \leq \delta$.

\item For any $\omega \in \Omega$ such that $P(\omega) \cdot Q(\omega) > 0$,
we have
$e^{-\epsilon} \leq \frac{P(\omega)}{Q(\omega)} \leq e^{\epsilon}.$

\end{compactitem}

Then, we have $\HS_{e^\epsilon}(P \| Q) \leq \delta$.

\end{fact}

}

\noindent \textbf{Neighboring Input Configurations.}
Recall that we consider neighboring input configurations
$C_0$ and $C_1$ in which exactly one user~$i$ has 
different bidding quantities.
Suppose in $C_0$, the quantity is $n_0$ and
in $C_1$, the quantity is $n_1 = n_0 + 1$.

\noindent \textbf{Adversarial View.} We decompose
the view space into $\Gamma \times \Lambda$,
where $\Gamma$ corresponds to 
the number of nodes created by user~$i$,
and $\Lambda$ includes (i) the bid submissions by all other users,
and (ii) any random bits
used to make choices in the matching procedure in Algorithm~\ref{alg:matching}.
The distributions on $\Gamma$ differ in $C_0$ and $C_1$,
but in the two scenarios, the distributions on $\Lambda$ are identical
and independent of the $\Gamma$ component.

Note that any other information observed by the adversary can be recovered
by a point in $\Gamma \times \Lambda$.

\noindent \textbf{Indifference Relation.}  
We define an almost trivial relation on $\Gamma \times \Lambda$, namely,
$(\gamma_0, \lambda_0) \sim (\gamma_1, \lambda_1)$ \emph{iff}
$\gamma_0 = \gamma_1$ and $\lambda_0 = \lambda_1$.
This means that the same number of
nodes are submitted by user~$i$ in the two scenarios;
moreover, all other users submit exactly the same bids in the two scenarios,
and the also the same random bits are used in the matching process.

However, this does not mean that the views of the adversary are the same in
both scenarios, because the number of real nodes for user~$i$ are different
in $C_0$ and $C_1$.  Nevertheless, if the views are different,
it must be because $n_0$ real nodes of user~$i$ have been matched
in both scenarios, and the algorithm tries to match the next node from user~$i$.
This is consistent with the indifference relation
described in Definition~\ref{defn:indifference}.

\begin{lemma}
Using an ideal commitment scheme,
the dark pool auction protocol is $(\epsilon, \delta)$-IDP.
\end{lemma}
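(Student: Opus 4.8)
The plan is to reduce the indifferential-privacy bound to the one-dimensional differential-privacy guarantee of the truncated geometric noise in Fact~\ref{fact:geom_dp}, exploiting the decomposition $\Gamma \times \Lambda$ of the view space set up above. Fix neighboring configurations $C_0, C_1$ in which user~$i$ holds $n_0$ and $n_1 = n_0 + 1$ units respectively. First I would argue that, conditioned on a fixed point $(\gamma, \lambda) \in \Gamma \times \Lambda$, the adversary's view is a deterministic function $f_b(\gamma, \lambda)$ of the configuration $C_b$ for $b \in \{0,1\}$: here $\lambda$ pins down all other users' bids and all random choices of Algorithm~\ref{alg:matching}, while $\gamma$ pins down the total number of nodes submitted by user~$i$.

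The crux is to show that for every $(\gamma, \lambda)$ the pair $(f_0(\gamma,\lambda), f_1(\gamma,\lambda))$ lies in the indifference relation $\mcal{F}$ of Definition~\ref{defn:indifference}. This is where the hiding property of the ideal commitment scheme enters: since real and fake nodes are indistinguishable until opened, the algorithm's sequence of match attempts and openings depends only on $\lambda$ and on the (identical) prices and counts, and so coincides across $C_0$ and $C_1$ until the algorithm first attempts to match what is position $n_0+1$ among user~$i$'s real-first-sorted nodes. In $C_0$ that node is fake while in $C_1$ it is real, so this is the earliest possible point of divergence; and by the real-before-fake ordering together with the rule that an owner reveals a fake node only after all its real nodes are matched, reaching this point certifies that $n_0$ units of user~$i$ have been matched, i.e. its order is fully executed in $C_0$. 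Hence either $f_0(\gamma,\lambda) = f_1(\gamma,\lambda)$, or the two sequences share a maximal common prefix after which $n_0$ units are matched, which is exactly the second bullet of $\mcal{F}$.

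Given this, I would exhibit the refinement pair explicitly rather than optimize over it. Let $\mathsf{P}^\Lambda$ be the common distribution on $\Lambda$ and $\mathsf{P}_b^\Gamma$ the distribution of $n_b + N$ with $N \sim \Geom^Z(e^\epsilon)$. Define $\widehat{\mathsf{P}}_b$ as the image of $\mathsf{P}_b^\Gamma \times \mathsf{P}^\Lambda$ under the map $g\colon (\gamma,\lambda) \mapsto (f_0(\gamma,\lambda), f_1(\gamma,\lambda)) \in \mcal{F}$. Then $\widehat{\mathsf{P}}_0$ has $\Omega_0$-marginal $\msf{View}(C_0)$ and $\widehat{\mathsf{P}}_1$ has $\Omega_1$-marginal $\msf{View}(C_1)$, so $(\widehat{\mathsf{P}}_0,\widehat{\mathsf{P}}_1)$ is a valid refinement pair for $\mcal{F}$. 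As both are the image of their respective $(\gamma,\lambda)$-distributions under the \emph{same} map $g$, the data-processing inequality for the hockey-stick divergence gives $\HS_{e^\epsilon}(\widehat{\mathsf{P}}_0 \,\|\, \widehat{\mathsf{P}}_1) \leq \HS_{e^\epsilon}(\mathsf{P}_0^\Gamma \times \mathsf{P}^\Lambda \,\|\, \mathsf{P}_1^\Gamma \times \mathsf{P}^\Lambda)$.

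Finally, because the $\Lambda$-factor is identical and independent in both scenarios, I would use that tensoring with a common independent component leaves the hockey-stick divergence unchanged, so the right-hand side equals $\HS_{e^\epsilon}(\mathsf{P}_0^\Gamma \,\|\, \mathsf{P}_1^\Gamma) = \HS_{e^\epsilon}(n_0 + N \,\|\, n_0 + 1 + N)$, which is at most $\delta$ by Fact~\ref{fact:geom_dp} for the chosen even $Z \geq \ceil{\frac{2}{\epsilon}\ln\frac{1}{\delta}}$. Taking the infimum in the definition of $\HS^{\mcal{F}}_{e^\epsilon}$ then yields $\HS^{\mcal{F}}_{e^\epsilon}(\msf{View}(C_0)\,\|\,\msf{View}(C_1)) \leq \delta$, and symmetry of the hockey-stick divergence covers the reverse pairing, giving the required $(\epsilon,\delta)$-IDP bound. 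I expect the main obstacle to be the verification that $(f_0,f_1) \in \mcal{F}$ for every $(\gamma,\lambda)$ — in particular, pinning down that the first divergence of the two view sequences occurs exactly at full execution of user~$i$'s order in $C_0$, which rests simultaneously on the hiding of the commitments and on the real-before-fake node ordering.
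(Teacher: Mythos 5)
Your proposal is correct and follows essentially the same route as the paper's proof: the same $\Gamma \times \Lambda$ decomposition, the same verification that paired views either coincide or diverge only at the point where user~$i$'s order has been fully executed in $C_0$ (hence lie in the indifference relation $\mcal{F}$), and the same reduction, via independence of the common $\Lambda$-component, to Fact~\ref{fact:geom_dp}. The only difference is presentational: you exhibit the refinement pair explicitly as a pushforward under $(\gamma,\lambda) \mapsto (f_0(\gamma,\lambda), f_1(\gamma,\lambda))$ and invoke data processing before taking the infimum, whereas the paper asserts the corresponding identity directly, justified by its earlier remark that each view has at most one $\mcal{F}$-neighbor, so the refinement on each side is unique.
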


\begin{proof}
Recall that from Definition~\ref{defn:idp},
our goal is to prove an upper bound for:
$\HS^{\mcal{F}}_{\epsilon}(\msf{View}(C_0) \| \msf{View}(C_1))$.

Because of the ideal commitment scheme,
during the bid submission, the server can only see
how many nodes are created by each user.
From user~$i$, the information collected at this stage 
is a point in $\Gamma$.
Suppose that in $C_0$, the distribution of the number of created nodes by user~$i$ is $P_0$;
and in $C_1$, the corresponding distribution is $P_1$.
Note that $P_0$ and $P_1$ are both distributions on $\Gamma$.
From Fact~\ref{fact:geom_dp},
we have: 
$\HS_{e^\epsilon}(P_0\| P_1) \leq \delta$.

According to the above discussion,
for both $C_0$ and $C_1$,
we have the same distribution $Q$ on $\Lambda$
that represents other users' bids and the random bits used by the server during
the matching process.

As argued above, a point in $\Gamma \times \Lambda$
is sufficient to recover the view of the adversary.
Hence,
we have:

$\HS^{\mcal{F}}_{e^\epsilon}(\msf{View}(C_0) \| \msf{View}(C_1))
= \HS_{e^\epsilon}((P_0, Q) \| (P_1, Q))$.

Moreover, observe that if $(\gamma_0, \lambda_0) = (\gamma_1, \lambda_1)$
correspond to two points in $\Gamma \times \Lambda$
under configurations $C_0$ and $C_1$, respectively,
it must be the case that during bid submission,
exactly $\gamma_0 = \gamma_1$ nodes are created by user~$i$.
Furthermore, everything is identical in both scenarios,
except that the $(n_0 + 1)$-st node in user~$i$'s list
is fake in $C_0$ and is real in $C_1$.
Therefore, the adversary either has identical views
in both scenarios (because at most
$n_0$ nodes in user~$i$'s list have been attempted to be matched), or else it must be the case that 
all real nodes of user~$i$ in $C_0$ have been fully matched;
this is consistent with the indifference relation~$\mcal{F}$.

However, note that $Q$ is independent of the $\Gamma$ component.
In general, the hockey-stick divergence (as well as other commonly used divergences)
satisfies the property that observing the same extra independent randomness
should not change the value of the divergence.
Hence, 
we have:
$\HS_{e^\epsilon}((P_0, Q) \| (P_1, Q)) = \HS_{e^\epsilon}(P_0\| P_1)$,
which we have already shown is at most~$\delta$.  This completes the proof.
\end{proof}
\ignore{
Using standard hybrid arguments~\cite{DBLP:conf/crypto/MironovPRV09}, we get the following corollary immediately.
}

\begin{corollary}
Replacing the ideal commitment scheme with a real-world commitment
scheme (such as~\cite{DBLP:journals/jacm/LinP15}),
the dark pool auction protocol is $(\epsilon, \delta)$-computational IDP.
\end{corollary}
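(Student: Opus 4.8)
The plan is to verify Definition~\ref{defn:cidp} directly. I would take $\Pi'$ to be the protocol that uses the ideal commitment scheme and $\Pi$ to be the protocol obtained by substituting the real-world non-malleable commitment of~\cite{DBLP:journals/jacm/LinP15}. The preceding Lemma already shows that $\Pi'$ is $(\epsilon,\delta)$-IDP, so the entire task reduces to proving that $\Pi$ and $\Pi'$ are computationally indistinguishable; the corollary then follows immediately from the definition of CIDP.

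First I would fix an arbitrary \ppt distinguisher playing the role of the semi-honest server together with the environment $\algZ$, and compare its view in the two executions on a common input configuration~$C$. Since the server is semi-honest, in both worlds every party commits to, and later opens, its true real/fake bits, so the committed and revealed values are \emph{identical}; the two executions differ only in the syntactic representation of the commitment objects (opaque handles versus real commitment strings) and of their openings. Thus the whole question is whether this purely representational change is detectable.

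Next I would run the standard hybrid argument of~\cite{DBLP:conf/crypto/MironovPRV09}, swapping the commitments one node at a time. Let $T$ denote the total number of nodes created over all users; since each $y_i = x_i + N_i$ with $x_i$ polynomially bounded and $N_i \le Z$, the value $T$ is polynomial, so only polynomially many commitments are involved. Define hybrids $H_0 = \Pi', H_1, \dots, H_T = \Pi$, where $H_t$ uses the real scheme for the first $t$ nodes and ideal handles for the remaining ones. Consecutive hybrids $H_{t-1}$ and $H_t$ differ only in the $t$-th commitment, and I would bound their distinguishing gap by the security of the commitment scheme: while that node is unopened its real commitment is hiding and hence indistinguishable from the opaque ideal handle, and once it is opened (which, by the indifference analysis in the preceding proof, can happen only after all of that user's real nodes are matched) both worlds reveal the same value. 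A union bound over the $T$ steps gives a negligible overall advantage, yielding $\Pi \approx_c \Pi'$ and hence $(\epsilon,\delta)$-CIDP.

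The step I expect to be the main obstacle is the per-hybrid reduction, because commitments are opened \emph{adaptively} as the matching in Algorithm~\ref{alg:matching} proceeds: when arguing about the $t$-th commitment, many other honest commitments may be opened before, concurrently with, or after it, so a naive one-shot hiding experiment does not apply. This is exactly why the construction is instantiated with the concurrently secure, non-malleable commitment of~\cite{DBLP:journals/jacm/LinP15}: non-malleability ensures that the surrounding honest commitments and their adaptive openings cannot be exploited to distinguish the challenge, so each hybrid step reduces cleanly to the scheme's hiding/non-malleability game. Once this concurrent reduction is set up, the remaining bookkeeping is routine and the corollary follows.
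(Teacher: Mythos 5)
Your overall skeleton coincides with the paper's intended argument: the paper proves the ideal-commitment protocol is $(\epsilon,\delta)$-IDP and then simply invokes ``standard hybrid arguments''~\cite{DBLP:conf/crypto/MironovPRV09} to replace the ideal functionality by a real commitment scheme, which is exactly your reduction of the corollary to Definition~\ref{defn:cidp} plus the claim $\Pi \approx_c \Pi'$. So the decomposition is the right one, and the paper itself gives no more detail than this.

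The gap is in your execution of the step you yourself flag as the main obstacle, and your proposed fix does not work. The per-execution hybrid $H_{t-1}$ versus $H_t$ does not reduce to hiding when the $t$-th node gets \emph{opened} during the run: in $H_{t-1}$ that node is an ideal handle whose ``opening'' is a bare announcement of the bit by the functionality, while in $H_t$ it is a real commitment string together with a verifying decommitment. These transcripts are trivially distinguishable (one contains a commitment/decommitment pair that verifies, the other contains no such pair), and binding prevents you from patching this by having $H_{t-1}$ commit to a dummy value and later open it to the true bit. Appealing to non-malleability does not close this hole: non-malleability rules out a man-in-the-middle producing commitments related to honest ones, which is irrelevant against a semi-honest server, and it says nothing about adaptive openings. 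The standard repair, implicit in the SIM-CDP-style argument the paper cites, is to phrase the indistinguishability via an \emph{offline} simulator: \Sim receives the complete ideal transcript, hence knows in hindsight which nodes of Algorithm~\ref{alg:matching} were opened and to which bits; it emits genuine commitments to the true bits with honest decommitments for all opened nodes (so those portions of the view are \emph{identical} to the real protocol), and commitments to a fixed dummy value for the never-opened nodes. Then the simulated and real views differ only in unopened commitments, and a hybrid over those reduces to plain (multi-instance) hiding with no adaptivity issue, since everything is post-processing of a fixed transcript. Taking $\Pi'$ to be the ideal protocol with \Sim applied to the adversary's view, $\Pi'$ remains $(\epsilon,\delta)$-IDP because the refinement-based hockey-stick divergence satisfies the data-processing inequality, and $\Pi \approx_c \Pi'$ as just argued, which yields the corollary.
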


\section{Implementation and Evaluation}

\subsection{Implementation}
We provide a comprehensive end-to-end implementation of our indifferentially private dark pool auction. We have also implemented a non-private auction to compare the total computation and communication time. Our code is available at \url{https://github.com/adyaagrawal/idp-darkpool}. 

Both the non-private and Indifferentially Private (IDP) protocols were incorporated into ABIDES~\cite{ByrdHB20},\footnote{ABIDES has also been used in simulating privacy preserving federated learning protocols such as the most recent work of~\cite{MaWAPR23}.} an open-source high-fidelity simulator tailored for AI research in financial markets such as stock exchanges. ABIDES is ideal for this purpose, as it supports simulations with tens of thousands of clients interacting with a central server for transaction processing, along with customizable pairwise network latencies to simulate real-world communication delays. We run the simulations on a personal x64-based Windows PC equipped with a single Intel Core i5-10210U processor running at 1.6 GHz and 16 GB of DDR4 memory. 

ABIDES employs the cubic network delay model, where the latency is determined by a base delay (within a specified range) and a jitter that influences the percentage of messages arriving within a designated timeframe, thereby shaping the tail of the delay distribution. Our experiments were conducted using three distinct network settings: local (client machines in NYC), global (client machines from NYC to Sydney), and very wide network (client machines across the world). For each setting, we configured the base delay according to Table~\ref{tab:network_delays} while employing the ABIDES's default parameters for jitter.

\begin{table}[h]
    \centering
    \begin{tabular}{|c|c|}
        \hline
        \textbf{Network Setting} & \textbf{Base Delay (ms)} \\
        \hline
        Local (Within NYC)              & 0.021 - 0.1                           \\
        Global (NYC to Sydney)   & 21 - 53    \\
        Very Wide Network (Across the World)        & 10 - 100                                \\
        \hline
    \end{tabular}
    \caption{Base delay for different network configurations.}
    \label{tab:network_delays}
\end{table}

In our non-private implementation, clients submit their orders without concealing their identities. These orders are organized into two doubly linked lists: one for buy orders and another for sell orders. Our implementation follows the maximum matching algorithm explained in Section~\ref{sec:algo}. Once matches are identified, they are sent back to the clients for execution.

For our IDP implementation, clients generate an array of sorted orders, with a few randomly numbered fake ones to ensure indifferential privacy.
They commit to their identity and whether the order is real or fake using the Cryptodome library in Python based on AES. The maximum matching algorithm is employed and if the match contains the order that the individual client has placed, the client opens the commitment to the server to reveal whether the order is real or fake. Then, the server requests the client to reveal the identity or tries to find a new match. Once both the parties have opened their identity and revealed it to the server, the orders are executed and the next round of matching starts.

\begin{figure*}[t]
    \centering
    \begin{subfigure}[t]{0.3\textwidth}  
        \centering
        \includegraphics[width=\textwidth]{./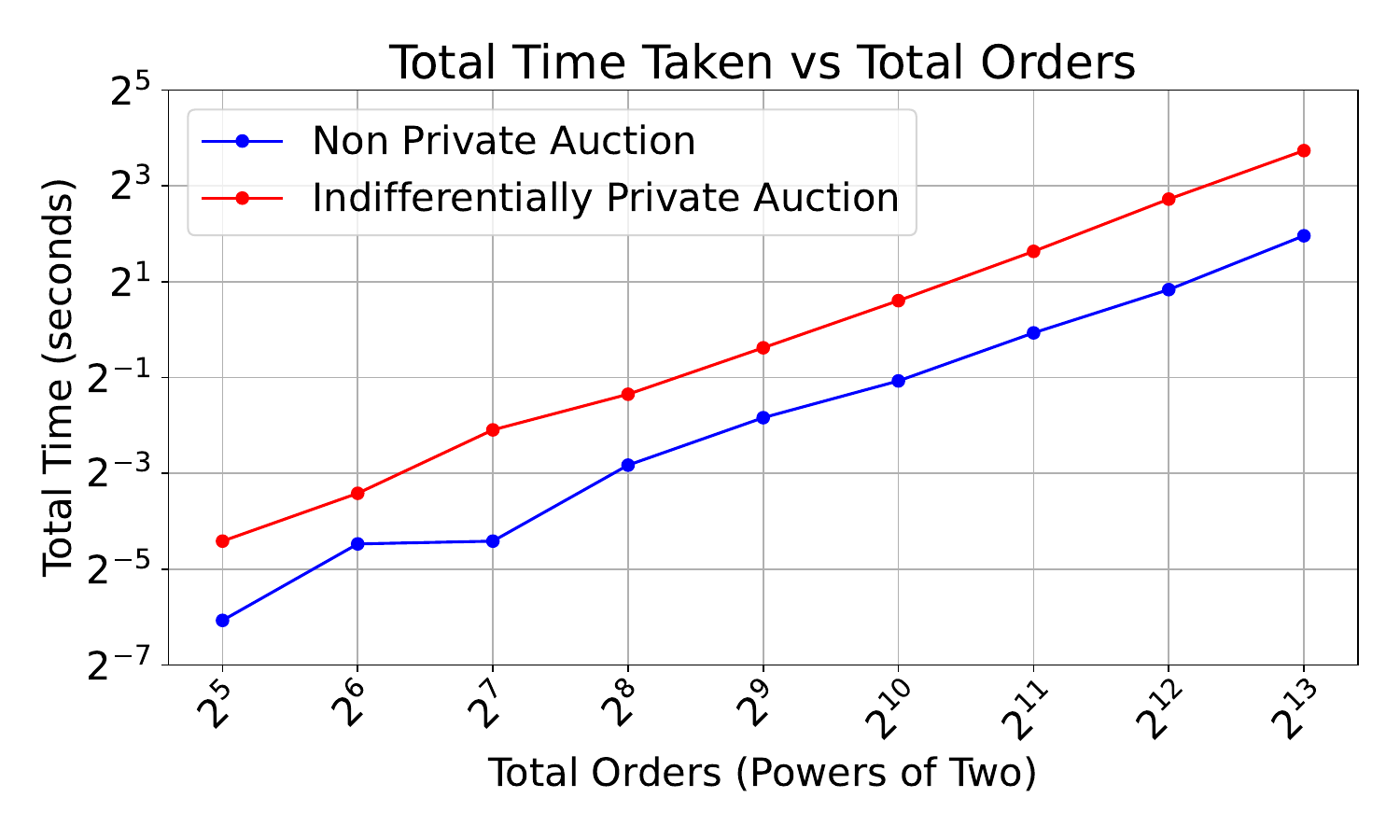}  
        \caption{\textnormal{Local setting.}}
        \label{fig:local}
    \end{subfigure}
    \hspace{1em} 
    \begin{subfigure}[t]{0.3\textwidth}  
        \centering
        \includegraphics[width=\textwidth]{./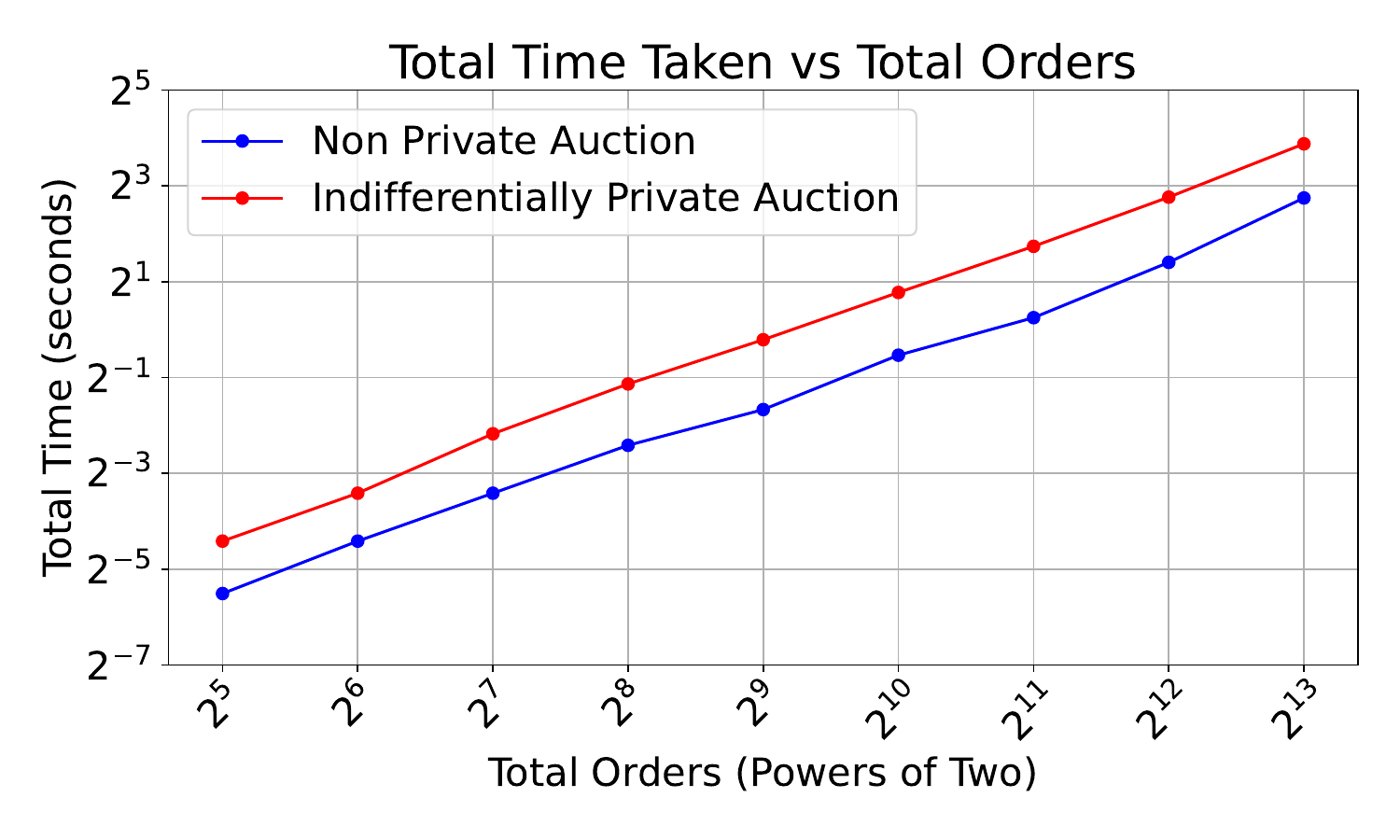}  
        \caption{\textnormal{Global setting.}}
        \label{fig:global}
    \end{subfigure}
    \hspace{1em} 
    \begin{subfigure}[t]{0.3\textwidth}  
        \centering
        \includegraphics[width=\textwidth]{./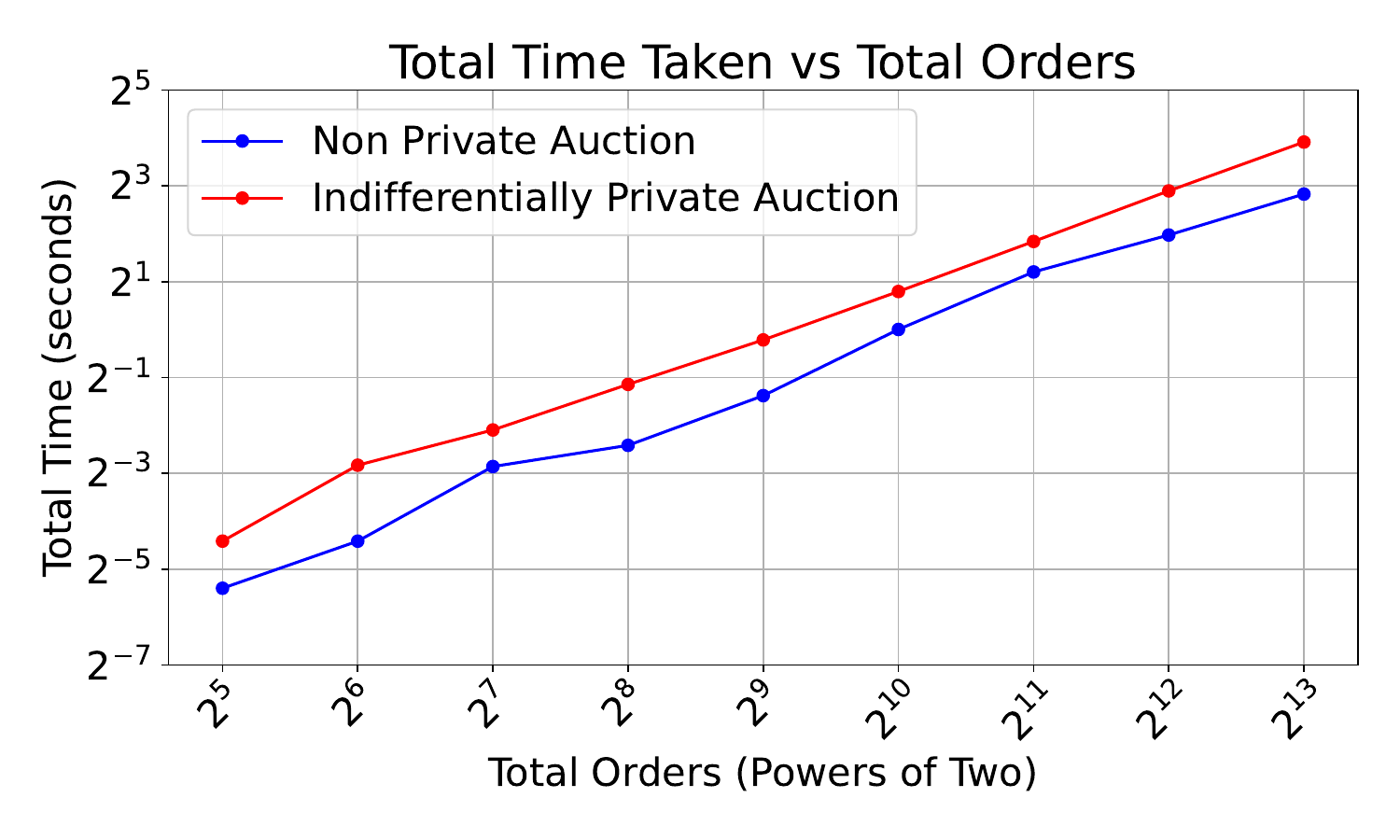}  
        \caption{\textnormal{Very Wide Network setting.}}
        \label{fig:world}
    \end{subfigure}
    \caption{\textbf{Comparison of Time Taken Across Different Settings.}}
    \label{fig:comparison}
\end{figure*}

\subsection{Experimentation}

\noindent \textbf{Performance comparison with non-private implementation.}
To evaluate the performance of our order matching system, we compared the total time required to match orders using both the non-private and Indifferentially Private (IDP) implementations in all three different network settings. We conducted experiments with a range of parties each generating \(2^3\) orders, starting from 4 parties (\(2^5\) orders) and extending up to 1024 parties (\(2^{13}\) orders).

The data generation process is designed to ensure that each set of randomly generated orders consistently achieves a matching rate of 70\% to 80\%.\footnote{When the \% of matches is lower, the total running time naturally decreases, as fewer matches result in less overhead.} For instance, each client generates 8 sorted orders, with 5 to 7 being real and 1 to 3 being fake. The prices for buy orders are randomly selected between 99 and 101, while sell orders are priced between 98 and 100.

\ignore{
\begin{figure}[htbp]
    \centering
    \begin{subfigure}[t]{0.5\textwidth}  
        \centering
        \includegraphics[width=\textwidth]{./local.pdf}  
        \caption{\textnormal{Local setting.}}
        \label{fig:local}
    \end{subfigure}
    \vspace{1em} 
    \begin{subfigure}[t]{0.5\textwidth}  
        \centering
        \includegraphics[width=\textwidth]{./global.pdf}  
        \caption{\textnormal{Global setting.}}
        \label{fig:global}
    \end{subfigure}
    \vspace{1em} 
    \begin{subfigure}[t]{0.5\textwidth}  
        \centering
        \includegraphics[width=\textwidth]{./world.pdf}  
        \caption{\textnormal{Very Wide Network setting.}}
        \label{fig:world}
    \end{subfigure}
    \caption{\textbf{Comparison of Time Taken Across Different Settings.}}
    \label{fig:comparison}
\end{figure}

}

From Figure~\ref{fig:comparison}, we can see that the total time taken using indifferential privacy auction does not deviate greatly from the non-private real world implementation. In Table~\ref{tab:comparison} we also mention the orders per second for two different scenarios.

\noindent \textbf{Performance comparison with FHE implementation.}
Considering throughput as the number of orders per second, FHE-based solutions can only process around 0.07 orders per second, as shown in Table 4 of~\cite{MazloomDPB23} for 40 orders, using threshold fully homomorphic encryption (tFHE) implemented via the GPU FHE library of~\cite{BalchDP20}. Moreover, the actual runtime is even longer, as Table 4 does not account for the additional overhead from threshold decryption and key management, which increases with the number of clients (as highlighted in Table 2 of~\cite{AsharovBPV20}).

For context, according to~\cite{AsharovBPV20}, running a comparison on encrypted data for a large amount of orders, $2^{13}$ orders, takes $7.6$ seconds on CPU, while our system processes and matches all orders in $13.32$ seconds in total (see Table~\ref{tab:comparison}), not just performing a comparison on a single order. The FHE-based solution requires at least one comparison operation per matching attempt, adding to its computational burden. Furthermore, the FHE-based timings from~\cite{MazloomDPB23} are measured on GPUs; actual performance on CPUs would be even slower, whereas our experiments are conducted entirely on CPUs.

This illustrates that FHE solutions are an overkill, even with the benefit of GPU acceleration, while our approach offers a practical solution. Last but not least, as shown in Table 4 of~\cite{MazloomDPB23}, the latest multi-operator solution, based on secure multiparty computation, is capable of processing only 26 orders per second, even on high-performance hardware. 

\begin{table}[hbt!]
    \centering
    \begin{tabular}{|c|c|c|}
        \hline
        \textbf{Orders} & \textbf{Non-Private Time (secs)} & \textbf{IDP Time (secs)} \\
        \hline
        40             & 0.019698                        & 0.046887                 \\
        \(2^{13}\)     & 3.887790                        & 13.328184                \\
        \hline
    \end{tabular}
    \caption{Comparison of Non-Private and IDP total running times for a small (40) and large size ($2^{13}$) of orders. The throughput (orders per second) for the non-private case is $2031$ and $2107$ and for the indifferential private case is $853$ and $615$ for 40 and $2^{13}$ orders, respectively.}
    \label{tab:comparison}
\end{table}

\noindent \textbf{Scalability with larger datasets.}
In the  experiments above, we compared the run-times of our protocol with a real-world auction protocol for up to 1,024 clients, each submitting 8 orders (resulting in a total of $2^{13}$ orders). The parameters used in our benchmarks reflect realistic scenarios based on data from U.S. dark pools. We have also extended our experiments to include scenarios where both the number of orders per client and the total number of clients double incrementally, scaling up to a total of 262K orders. From Figure~\ref{fig:scalable}, we can see that the differentially private protocol scales linearly with the increased workload. 

\begin{figure}[H]
    \centering
    \begin{subfigure}[t]{0.45\textwidth}  
        \centering
        \includegraphics[width=\textwidth]{./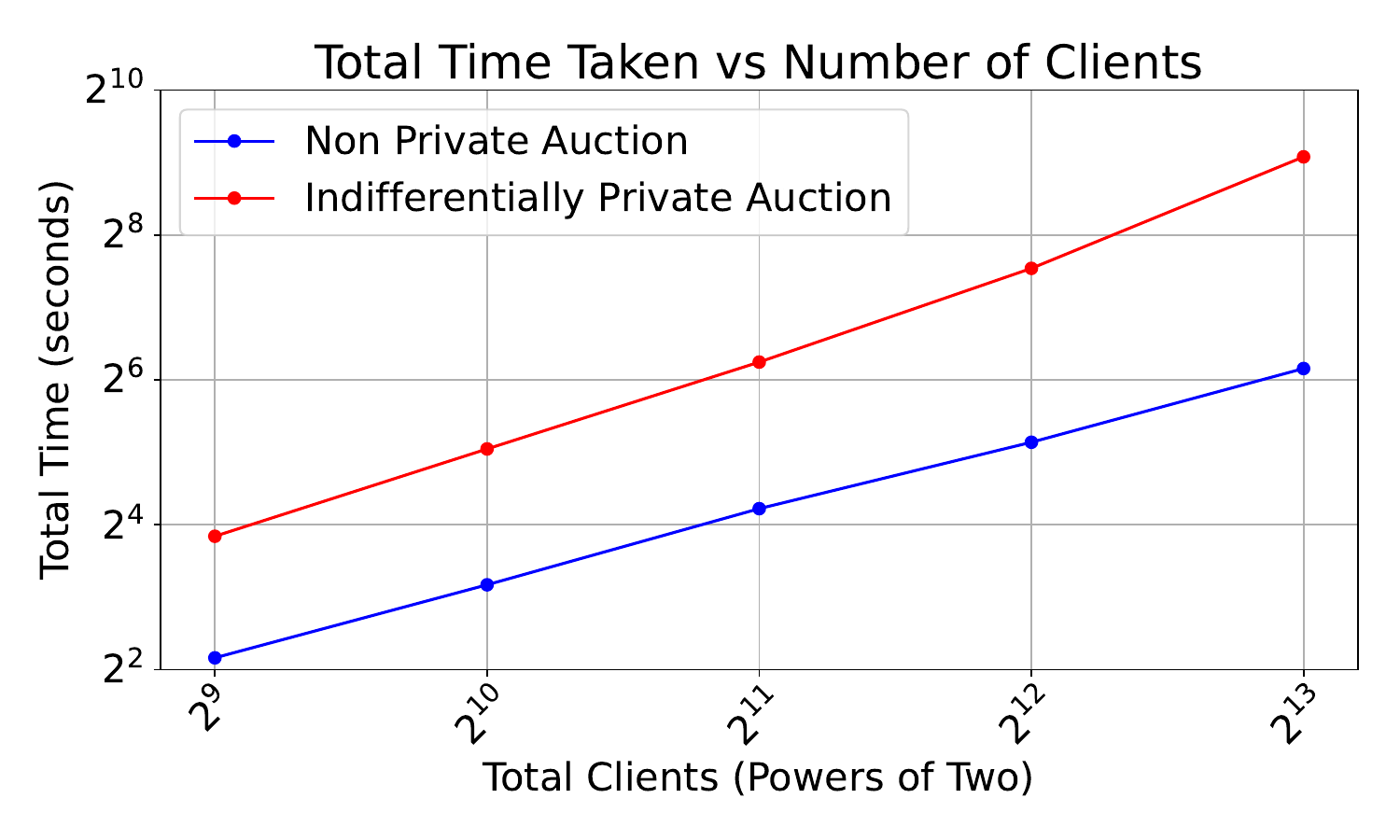}  
        \caption{\textnormal{Runtimes in seconds when number of clients increases and the orders per client is fixed to 8}}
        \label{fig:client}
    \end{subfigure}
    \hspace{1em} 
    \begin{subfigure}[t]{0.45\textwidth}  
        \centering
        \includegraphics[width=\textwidth]{./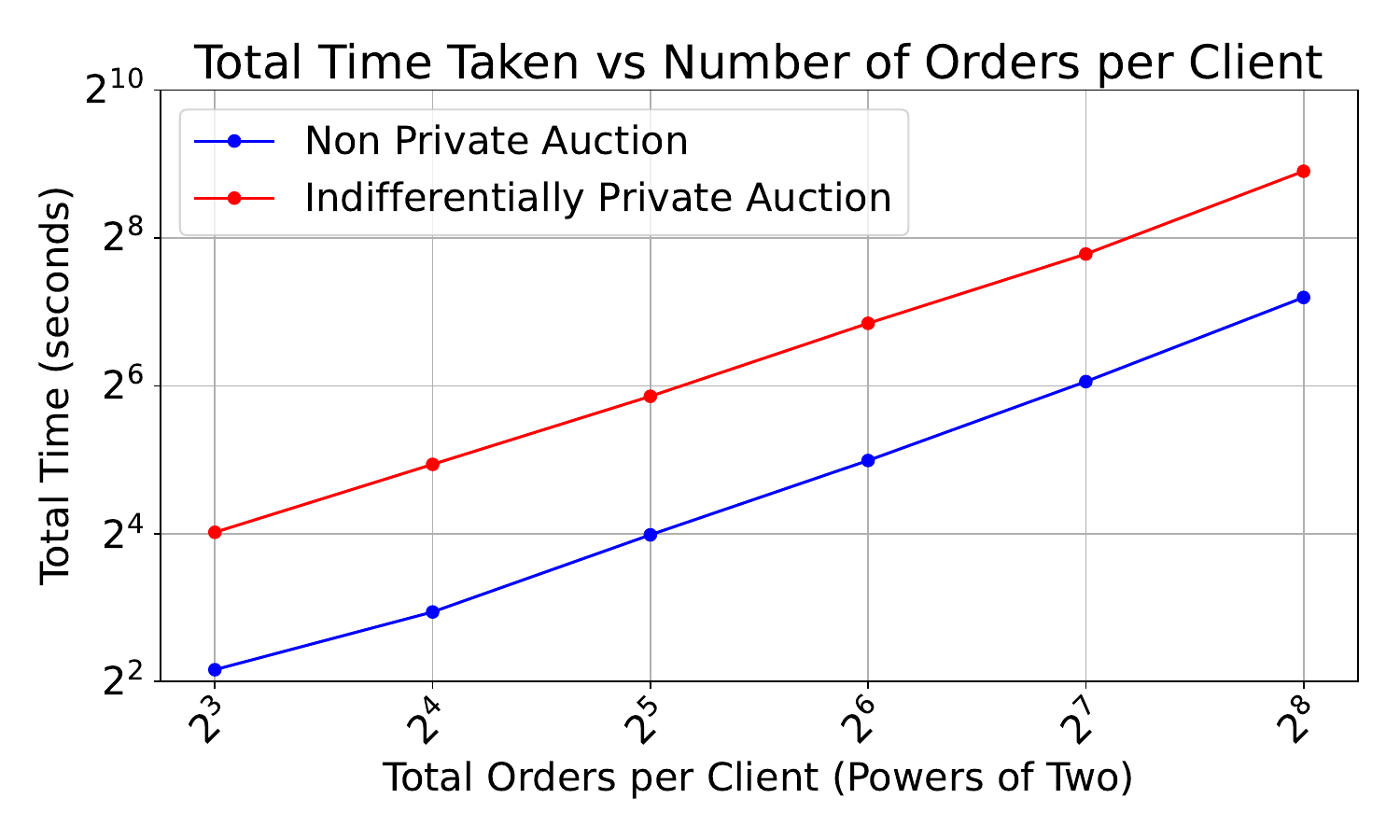}  
        \caption{\textnormal{Runtimes in seconds when the number of orders per client increases and the total number of clients is fixed to 1024}}
        \label{fig:order}
    \end{subfigure}
    \vspace{1em} 
    \caption{\textbf{Extended runtime tables for larger number of clients and orders}}
    \label{fig:scalable}
\end{figure}

\section{Conclusion}
In this work, we addressed the limitations of existing privacy-preserving auctions in high-frequency trading, such as dark pools, where auctioneers can be untrustworthy. Previous methods, like fully homomorphic encryption, were impractical due to their overhead. Our approach, based on the new notion of Indifferential Privacy, provides an efficient, privacy-preserving continuous double auction that enables maximum matching while minimizing risks. This makes our system a practical and secure alternative addressing both performance and security concerns in modern trading.

\section*{Acknowledgements.}
T-H. Hubert Chan was partially supported by the Hong Kong RGC grants
17201823 and 17202121. 
This paper was prepared for informational purposes by the Artificial Intelligence Research group of JPMorgan Chase \& Co.~and its affiliates ("JP Morgan'') and is not a product of the Research Department of JP Morgan. JP Morgan makes no representation and warranty whatsoever and disclaims all liability, for the completeness, accuracy or reliability of the information contained herein. This document is not intended as investment research or investment advice, or a recommendation, offer or solicitation for the purchase or sale of any security, financial instrument, financial product or service, or to be used in any way for evaluating the merits of participating in any transaction, and shall not constitute a solicitation under any jurisdiction or to any person, if such solicitation under such jurisdiction or to such person would be unlawful.

\clearpage

\bibliographystyle{alpha}
\bibliography{sample}

\end{document}